\documentclass[runningheads]{llncs}
\usepackage[T1]{fontenc}
\DeclareFontShape{T1}{cmr}{m}{scit}{<-> ssub * cmr/m/sc}{}
\usepackage{silence}
\usepackage{mathtools}
\usepackage{array}
\usepackage{multirow} 
\usepackage{marvosym}
\usepackage{todonotes}
\usetikzlibrary{arrows.meta}

\usepackage{graphicx}
\usepackage[hidelinks]{hyperref}
\usepackage{color}

\usepackage{framed}
\def\shuffleProduct{\sqcup\mathchoice{\mkern-7mu}{\mkern-7mu}{\mkern-3.2mu}{\mkern-3.8mu}\sqcup{\mkern3.8mu}}
\newcommand{\IDLetter}[1]{\langle #1 \rangle} 
\newcommand{\bBlockLength}[1]{{\ell}_{#1}} 
\newcommand{\SchedulingBaseProblem}[0]{\textsc{Scheduling Of Jobs With Precedence Constraints}}

\begin{document}
\title{The Parameterized Complexity of Scheduling with Precedence Delays: \texorpdfstring{\\}{} Shuffle Product and Directed Bandwidth\thanks{An extended abstract of this paper appears in the proceedings of the 37th International Workshop on Combinatorial Algorithms, IWOCA 2026.}}
\titlerunning{Scheduling with Precedence Delays: Shuffle Product and Directed Bandwidth}
%
\author{Hans L. Bodlaender\inst{1}\orcidID{0000-0002-9297-3330} \and \\ Maher Mallem \inst{2}\orcidID{0000-0001-5654-1090} (\Letter)}
\authorrunning{H.~L.~Bodlaender \and M.~Mallem}
%
\institute{Department of Information and Computing Sciences, Utrecht University, the Netherlands \\
\email{h.l.bodlaender@uu.nl}\\ \and
 Inria, CNRS, ENS de Lyon, Université Claude Bernard Lyon 1, LIP, UMR 5668, 69342, Lyon cedex 07, France\\
\email{maher.mallem@ens-lyon.fr}}
\maketitle              
\begin{abstract}
In this paper, we study the parameterized complexity of several variants of scheduling with precedence constraints between jobs. Namely, we consider the single machine setting with delay values on top of the precedence constraints. Such scheduling problems are related to several decades-old problems with open parameterized complexity status, notably \textsc{Shuffle Product} and \textsc{Directed Bandwidth}. We obtain XNLP-completeness results for both problems, and derive implications to scheduling with minimum (resp. maximum) delays parameterized by the width of the directed acyclic graph giving the precedence constraints, and/or by the maximum delay value in the input. Regarding \textsc{Directed Bandwidth}, we also settle the case of trees by showing XNLP-completeness parameterized by the target value. Beyond these results, we believe that \textsc{Shuffle Product} is an unusual and promising addition to the list of XNLP-complete problems.

\keywords{Parameterized Complexity  \and XNLP \and Scheduling \and Shuffle Product \and Directed Bandwidth.}
\end{abstract}

\section{Introduction}
In this paper, we study the parameterized complexity of several scheduling problems where there are precedence constraints with delays between jobs. 
Such delays are handy in that they can model a wide range of real-life problems. For instance, minimum delays mimic how instructions are handled in pipelined processors~\cite{BrunoJonesSo:80:Deterministic}, and maximum delays can model products of short shelf life. Even theoretically speaking, several decades-old problems with open parameterized complexity status like \textsc{Unary Bin Packing}~\cite{JansenKratschMarx:13:Bin} and \textsc{Directed Bandwidth}~\cite{Bodlaender:21:Parameterized} are equivalent to scheduling settings with precedence delays - more on this in the later sections.

Several of the problems studied in this paper have the same complexity: they are known to be in the class XP --- that is, for each fixed value of the parameter, there is a polynomial time algorithm, but the exponent of the polynomial grows when the parameter grows, and the problems are not known to be fixed-parameter tractable. For many problems, the precise
complexity is unknown. In this paper, we show that several parameterized scheduling problems with precedence constraints and delays are complete for a complexity class now known as~XNLP.

The class XNLP was originally introduced by Elberfeld et al.~\cite{ElberfeldST15}, and renamed to XNLP by Bodlaender et al.~\cite{BodlaenderGNS22a}. In~\cite{BodlaenderGNS22a}, several problems were shown to be complete for this class, including 
\SchedulingBaseProblem{} with both the number of machines and the width of the precedence graph as parameters --- the first example where a scheduling problem was shown to be XNLP-complete.
Other recent work that shows that scheduling problems are complete
for the class XNLP or for the related class XALP~\cite{BodlaenderGJPP22a} can be found in \cite{BodlaenderHL25,Mallem:24:Parameterized}.
Showing that a problem is XNLP-complete is interesting for several reasons: it pinpoints the exact complexity class of the problem
and, assuming a conjecture by Pilipczuk and Wrochna~\cite{PilipczukW18}, we have that there is no algorithm in XP for the problem that uses
only $f(k)n^{O(1)}$ space --- indeed, about all known XP algorithms for these problems employ dynamic programming and have space usage of $\Omega\left(n^{f(k)}\right)$ for $f$ a divergent function. Also, XNLP-hardness implies hardness for all classes $W[i]$ with $i$ positive integer (e.g., see Lemma~$2$ in~\cite{BodlaenderGNS22a}).

Among the several parameterized problems whose complexity we resolve by showing XNLP-completeness, we have the \textsc{Shuffle Product} problem. Given a target word~$t$ and source words $s_1, \ldots, s_k$, one asks whether $t$ can be obtained by interleaving the letters of $s_1, \ldots, s_k$, while keeping the order of the letters from the original words.
We strengthen the W[2]-hardness proof for \textsc{Shuffle Product} from~\cite{vanBevern16} to XNLP-completeness (Section~\ref{subsection:sp_XNLP_completeness}). We believe that this
result is of independent interest, as \textsc{Shuffle Product} can be used as the starting point for several XNLP-hardness proofs. From this
result, in Section~\ref{subsection:sp_consequences}, we notably derive XNLP-completeness of single machine scheduling with precedence constraints, constant minimum delays and processing time at most two parameterized by the width of the directed acyclic graph.

In Section~\ref{section:directed_bandwidth}, we show XNLP-completeness of a scheduling problem with precedence constraints and maximum delays, namely the single machine scheduling of unit-time jobs under precedence constraints with equal maximum delays. This problem is equivalent to the directed variant of the well studied \textsc{Bandwidth} problem. Formulating it as a graph problem, we are given in the \textsc{Directed Bandwidth} problem a directed acyclic graph, and ask for a topological ordering $f$ with for each arc $vw$, $f(v) < f(w)$ and $f(w)-f(v)\leq k$ for some target value $k$. We give two XNLP-completeness results for \textsc{Directed Bandwidth} in Section~\ref{section:directed_bandwidth}, namely for general directed acyclic graphs with both the target bandwidth and the width as parameters, and for directed trees with the target bandwidth as parameter.

\begin{table}[tbp]
    \centering
    \setlength\arrayrulewidth{1pt}
    \setlength\extrarowheight{3pt} 
    \begin{tabular}{|c|c|c|c|}
        \hline
        \emph{Problem} & \emph{Section} & \emph{Parameter} & \emph{Result} \\ \hline
        \multirow{2.25}{*}{\hyperref[definition:shuffle_product_problem]{\textsc{(Binary) Shuffle Product}}} & \multirow{2.25}{*}{\ref{subsection:sp_XNLP_completeness}} & number of & \multirow{2.25}{*}{XNLP-complete} \\ 
        & & source words & \\ \hline
        \multirow{4.25}{*}{\hyperref[definition:directed_bandwidth]{\textsc{Directed Bandwidth}}} & \multirow{2.25}{*}{\ref{subsection:dbdags}} & target value & XNLP-complete for \\ 
        & & plus width & directed acyclic graphs \\ \cline{2-4}
        & \multirow{2.25}{*}{\ref{subsection:dbtrees}} & \multirow{2.25}{*}{target value} & XNLP-complete for \\ 
        & & & directed trees \\ \hline
    \end{tabular}
    
    \begin{tabular}{c}
        \\
    \end{tabular}
    \caption{Summary of our main results. Connections between \textsc{Shuffle Product} and scheduling with minimum precedence delays are discussed in Section~\ref{subsection:sp_consequences}. \textsc{Directed Bandwidth} is equivalent to single machine scheduling of unit-time jobs with equal-length maximum precedence delays - see the beginning of Section~\ref{section:directed_bandwidth}.}
    \label{table:results}
\end{table}

Our main results are summarized in Table~\ref{table:results}. In many cases, we improve upon or complement existing results. We delay the discussion of related literature to the separate sections. Figure~\ref{fig:problem_reductions} summarizes the XNLP-hardness reductions proposed in this paper.

\section{Preliminaries}\label{section:preliminaries}

\subsection{Definitions}\label{subsection:definitions}

    Given two integers $a,b$, $[a,b]$ denotes set $\{a, a+1, \ldots, b\}$.
    
    \paragraph{Graphs.} 
    
    We assume the reader to be familiar with standard notions from graph theory. Throughout the paper, given a graph $G$, $n$ denotes the number of vertices of $G$.
       
    We consider two types of directed trees: downwards directed trees, where each arc is directed from a vertex to one of its children, and upward directed trees, where each arc is directed from a vertex to its parent. Note that a hardness proof for one of these two trees directly transforms to a hardness proof for the other type, by reversing the direction of all arcs.
    
    \paragraph{Scheduling.} 
    
    The goal of scheduling is to complete a set of jobs over time under a set of constraints. Unless otherwise stated, $n$ is the number of jobs, $\mathcal{J} = [1,n]$ is the set of jobs and $G_{prec} = (\mathcal{J}, A)$ is a directed acyclic graph representing precedence constraints between jobs. Given two jobs $i,j \in \mathcal{J}$, if $ij$ is an arc in $G_{prec}$ then job~$i$ must be completed before job~$j$ is started. For the sake of simplicity, we restrict ourselves to the parallel identical machine setting, where $m$~machines are available to execute jobs. Each machine can execute at most one job at a time. In our base setting, every job takes unit time to be processed. The formal definition is the following:
    
    \begin{verse}
        \SchedulingBaseProblem \\
        \textbf{Given:} positive integers $n,m,D$ and a directed acyclic graph $G_{prec}$.  \\
        \textbf{Question:} is there a schedule completing all jobs by time $D$ and meeting all precedence and machine constraints? In other words, is there a function $f: \mathcal{J} \rightarrow [0,D-1]$ assigning start times to jobs such that:
        \begin{itemize}
            \item[($i$)] for all $t \in [0, D-1]$, $|\{i \in \mathcal{J}, f(i) = t\}| \le m$, and
            \item[($ii$)] for every arc $ij$ in $G_{prec}$, $f(i) + 1 \le f(j)$?
        \end{itemize}
    \end{verse}
    
    Natural extensions include giving two properties to each job~$j$: a positive integer $p_j$ representing its processing time, and a positive integer $size_j$ representing the number of required machines to process job~$j$. In the base setting, both $p_j$ and $size_j$ are equal to one. If values $size_j$ and/or $p_j$ can be non-unit, then condition~($i$) (resp.~($ii$)) becomes: $\left(\sum_{j \in \mathcal{J}, f(j) \le t \le f(j)+p_j-1} size_j\right) \le m$ (resp. $f(i) + p_i \le f(j)$).
    
    In order to denote these problem variants in a concise way, we use the standard three-field notation $\alpha | \beta | \gamma$ introduced in~\cite{GrahamLawlerLenstra:79:Optimization}. Fields~$\alpha$, $\beta$ and $\gamma$ respectively describe the machine environment, the job properties and the objective function. In this paper, the latter will always be denoted by `$C_{max} \le D$', meaning that all jobs must be completed by time~$D$. In field~$\alpha$, the parallel identical machine setting is denoted by `$P$'. If $m$ is a constant then we denote `$Pm$'; if $m=1$ then we are in the single machine setting, which is denoted by `$1$'. In field~$\beta$, `$prec$' denotes the presence of directed acyclic graph $G_{prec}$ in the input. If there is no mention of $size_j$ (resp. $p_j$), then by default they are equal to $1$ (resp. they can be any positive integer). For instance, \SchedulingBaseProblem{} is denoted by $P|prec, p_j=1|C_{max} \le D$.
    
    Precedence delays specify additional waiting time constraints on top of existing precedence relations. Given an arc $ij$ in $G_{prec}$, we say that there is a minimum (resp. maximum, exact) precedence delay of value $\ell_{i,j}$ when job~$j$ can only start at least (resp. at most, exactly) $\ell_{i,j}$ time units after job~$i$ is completed. In particular, note that minimum precedence delays of value zero have a matching requirement to regular precedence constraints. When scheduling problems allow for precedence delays, we extend the three-field notation the same way as in~\cite{MallemHanenMunier:25:Single}. Namely, in field~$\beta$, the presence of minimum (resp. maximum, exact) precedence delays is denoted by `$prec(\ell^{min})$' (resp. `$prec(\ell^{max})$', `$prec(\ell^{ex})$'). In this case, all arcs in~$G_{prec}$ are equipped with a delay of the respective nature.
    
    In this paper, we focus on two parameters for scheduling problems: the \emph{width} of the instance, which is defined as the width of~$G_{prec}$ (see the beginning of Section~\ref{section:directed_bandwidth}), and, in the presence of precedence delays, the \emph{maximum delay value} in the input.
    
    \paragraph{XNLP.} 
    
    We assume the reader to be familiar with the most standard notions of parameterized complexity; otherwise, e.g., see~\cite{Cyganbook}. In this paper, we focus on showing problems to be complete for the class XNLP.
    
    XNLP is the class of parameterized problems, that can be solved with a non-deterministic
    Turing Machine in $f(k)n^{O(1)}$ time and $O(f(k)\log n)$ space, for a computable function $f$, with $k$ the parameter and $n$ the input size. A \emph{parameterized logspace} reduction from parameterized problem $A$ to parameterized problem $B$ is an algorithm, that, when
    given an input $(x,k)$ gives an output $(x',k')$, with the following properties:
    (1) $A(x,k)\Leftrightarrow B(x',k')$; (2) $k'\leq g(k)$ for a computable function $g$;
    (3) the algorithm uses $f(k) + O(\log n)$ space for a computable function $f$; (4) the algorithm uses $n^{O(1)}$ time.

    \begin{figure}[tbp]
        \centering
        \usetikzlibrary{decorations.pathmorphing,calc}
        \begin{tikzpicture}
            \newcommand{\xLeft}{-3}
            \newcommand{\xRight}{3}
            \newcommand{\yStep}{-1.2}
            \tikzstyle{problem}=[draw,align=center,rectangle,minimum height=.55cm]
            \node[problem] (NNCCM) at (0,0) {\textsc{NNCCM Acceptance}};
            \node[problem] (SP) at (\xLeft,\yStep) {\textsc{(Binary) Shuffle Product}};
            \node[problem] (Parallel) at (\xLeft,2*\yStep) {Scheduling with parallel \\ identical machines};
            \node[problem] (Min) at (\xLeft,3.1*\yStep) {Scheduling with minimum \\ precedence delays};
            \node[problem] (DB) at (\xRight,\yStep) {\textsc{Directed Bandwidth}};
            \node[problem] (Max) at (\xRight,2*\yStep) {Scheduling with maximum \\ precedence delays};
            \tikzstyle{reduction}=[-{Latex[width=2mm]}]
            \draw[reduction] let \p1=($(SP)-(NNCCM)$) in
                (NNCCM) -- +(\x1,0) -- (SP) node[right, pos=0.5] {\ref{subsection:sp_XNLP_completeness}};
            \draw[reduction] (SP) -- (Parallel) node[right,midway] {\ref{subsection:sp_consequences}};
            \draw[reduction] (Parallel) -- (Min) node[right,midway] {\ref{subsection:sp_consequences}};
            \draw[reduction] let \p2=($(DB)-(NNCCM)$) in
                (NNCCM) -- +(\x2,0) -- (DB) node[right, pos=0.5] {\ref{subsection:dbdags}, \ref{subsection:dbtrees}};
            \draw[reduction] (DB) -- (Max);
            \draw[reduction] (Max) -- (DB);
        \end{tikzpicture}
        \caption{Reductions between XNLP-complete problems in this paper. See the corresponding sections for the exact definitions of the problem variants.}
        \label{fig:problem_reductions}
    \end{figure}
    
    The class XNLP was first introduced by Elberfeld et al.~\cite{ElberfeldST15}, and was renamed
    to XNLP in~\cite{BodlaenderGNS22a}. Also, Bodlaender et al.~\cite{BodlaenderGNS22a} introduced
    a problem that they named \textsc{NNCCM Acceptance}, and showed it to be XNLP-complete. This problem, defined below, is used as starting point for our reductions.
    
    In an instance of \textsc{NNCCM Acceptance} we are given: an integer $k$,
    that tells the number of counters we have; an integer $n$ (the maximum value of each counter); and a sequence of $r$ \emph{checks}. Each check is a 4-tuple $(c_1,n_1,c_2,n_2) 
    \in [1,k] \times [0,n] \times [1,k] \times [0,n]$. These describe a (somewhat artificial but useful for hardness proof) machine model, that works as follows.
    \begin{itemize}
        \item There are $k$ counters, each with an initial value of $0$.
        \item Each of $r$ rounds consists in the following two steps. In round $i\in [1,r]$:
        \begin{enumerate}
            \item In this non-deterministic step, some, one or none of the counters are increased, but never to a value larger than $n$.
            \item The $i$th check is done: if the $c_1$st counter has value $n_1$ and the
            $c_2$th counter has value $n_2$, then the machine halts and rejects.
        \end{enumerate}
        \item If the machine did not reject in any of the $r$ rounds, it accepts.
    \end{itemize}
    
    \begin{verse}
        \textsc{NNCCM Acceptance} \\
        \textbf{Given:} description of an NNCCM: integers $k$, $n$, $r$ and a sequence of~$r$~4-tuples in
       $\left( [1,k] \times [0,n] \times [1,k] \times [0,n] \right)^r $. \\
       \textbf{Parameter:} number of counters $k$. \\
       \textbf{Question:} is there an accepting run of the NNCCM?
    \end{verse}
    
    \begin{theorem}[Bodlaender, Groenland, Nederlof, Swennenhuis~\cite{BodlaenderGNS22a}]

        \textsc{NNCCM Acceptance} is complete for \textup{XNLP}.
    \end{theorem}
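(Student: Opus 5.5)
Membership and hardness are handled separately. Membership in XNLP is direct. A nondeterministic Turing machine simulates the NNCCM by storing the $k$ counter values, each in $O(\log n)$ bits, together with a round index $i \in [1,r]$ in $O(\log r)$ bits. That is $O(k\log n + \log r)$ space. In round $i$ the machine nondeterministically decides, counter by counter, whether to increase it and by how much; it is enough to increase by guessing the new value in $[\text{old},n]$ one bit at a time, or by repeated unit increments. It then reads the $i$th check from the input tape and rejects if both equalities hold. The running time is polynomial in $n\cdot r\cdot k$. If the input encodes the values in binary, we bound the number of increments by guessing the new value directly, so the running time is $f(k)\cdot\mathrm{poly}$ of the input size.

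For XNLP-hardness we start from a known XNLP-complete problem. We take the characterization by Elberfeld et al. of XNLP (their class $N[f\,\mathrm{poly},f\log]$), under which acceptance of a nondeterministic Turing machine running in $f(k)\,\mathrm{poly}(n)$ time and $O(f(k)\log n)$ space is complete. An equivalent natural starting point is \textsc{Chained Multicolored Clique} or a bounded-width timetabling problem. Given such a machine $M$ and input $x$, we encode each configuration as $O(f(k))$ registers, each holding a value in $[0,\mathrm{poly}(n)]$: the work-tape blocks of $\log n$ bits, the head positions, and the state. The NNCCM then uses $k' = O(f(k))$ counters.

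The main obstacle is that NNCCM counters only increase, while registers of $M$ change arbitrarily. To handle this we give each time step $t$ of $M$ its own value range. Counter $c_j$ is meant to hold value $t\cdot N + v$ after step $t$, where $v<N$ is the content of register $j$ and $N$ exceeds every register value. Since $t$ increases, the counters are monotone even when $v$ decreases. An extra clock counter holds $t\cdot N$. Rounds are grouped into phases, one phase per step $t$, and the checks of phase $t$ are designed to forbid every bad pair of values. Two kinds of bad pairs are forbidden. The first kind synchronizes: for each register counter $j$, a value outside $[tN,(t+1)N)$ is ruled out relative to the clock. The second kind is local consistency: a transition of $M$ depends on only $O(1)$ registers, namely the state, a head position, and the block under the head. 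A transition relation over a few registers can be expressed by forbidding pairs only if we first introduce auxiliary ``combined'' counters. These store, for example, the state and the symbol read as a single value in a polynomial range, and pair checks tie each combined counter to its components. Then every transition constraint becomes a conjunction of forbidden pairs between a combined counter of step $t$ and a register counter of step $t+1$. Forbidding all bad pairs takes polynomially many checks per step, each enumerated over the $\mathrm{poly}(n)$ possible values.

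The remaining steps are as follows. Make the final check force an accepting state at the last step. Verify correctness in both directions: an accepting run of $M$ yields counter increases that avoid every check, and conversely any run avoiding all checks decodes step by step into a valid computation. Finally, confirm that the reduction is a parameterized logspace reduction. The checks are produced by nested loops over $t$, the counter indices, and the values, all of which need only $O(f(k)+\log n)$ space.
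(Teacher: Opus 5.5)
The paper gives no proof of this theorem; it quotes it from Bodlaender, Groenland, Nederlof and Swennenhuis, so your argument has to stand on its own. Your membership part is fine. One small point: if $n$ were written in binary, $O(k\log n)$ bits would not be $O(f(k)\log(\text{input size}))$, but storing for each counter a pointer to a value occurring in the input fixes this.

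The hardness part has a genuine gap. It treats the checks of a phase as constraints on one configuration. In an NNCCM, however, a nondeterministic increase precedes \emph{every} check, and each check forbids one exact pair of values. This has two consequences. (i) Counters may change between two checks of the same phase, so a run can dodge the forbidden pairs one at a time. (ii) One pass of exact-value checks can force lower bounds (check $0,1,2,\dots$ in increasing order), but not upper bounds, equalities, or membership in $[tN,(t+1)N)$, because a counter may move onto a value just after that value was checked. In particular, nothing pins the clock to $tN$. A counter that runs ahead is invisible to all checks of phase $t$, which only mention values in $[tN,(t+1)N)$. For example, with $T$ the number of simulated steps, a run that raises every counter into the top range $[TN,(T+1)N)$ before the first check can only be caught by checks mentioning top-range values, so almost all your transition checks become void. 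The same objection applies to tying a combined counter to its components and to forcing acceptance at the end.

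(iii) Even with genuine snapshots, one counter per register exposes only one time step at a time. The transition, and in particular the frame condition that every block not under the head is unchanged, relates register $j$ at step $t$ to register $j$ at step $t+1$. Nothing in your construction holds both values simultaneously. A combined counter ``of step $t$'' that is supposed to persist while the registers advance may itself increase in the meantime, since monotonicity only gives $\le$ across time. Hence the claim that every run avoiding all checks decodes into a valid computation is not established.

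A repair needs three ideas absent from your sketch:
\begin{itemize}
\item Repeat each block of checks more than $k'n'$ times, where $n'$ is the maximum counter value; $k'n'$ bounds the total number of increments. By pigeonhole, some copy runs with no increment, and within it any set of forbidden pairs, upper bounds included, acts on a single configuration. This is the same device as $N>2kn$ in the paper's \textsc{Shuffle Product} reduction.
\item Carry every value $v$ by two counters with opposite encodings, e.g.\ $tN+v$ and $tN+N-1-v$, tied in each snapshot. Monotonicity of both then prevents $v$ from changing between snapshots.
\item Use two leapfrogging copies of the registers, so that consecutive steps coexist in one snapshot.
\end{itemize}
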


\subsection{Parameter Relations}\label{subsection:parameter_relations}
In this subsection, we discuss relations between the parameters considered in this paper.

    Based on the existing reductions in this paper and in the literature, there seems to be two families of parameters: one for width-like parameters, and one for bandwidth-like parameters. In the first family, we have the width for graphs and scheduling instances, and the number of source words in \textsc{Shuffle Product}. In the second family, we have the target value for bandwidth-like problems, the maximum delay value in scheduling with precedence delays, and the number of machines in the parallel identical machine setting. Despite the apparent correspondence between parameters in the same family, the relations between both parameter families seem to differ depending on the nature of the problem.

    Regarding \SchedulingBaseProblem{} and variants with non-unit processing times, w.l.o.g. the number of machines can be assumed to be at most the width. Indeed, in any schedule meeting all precedence constraints, there can never be more jobs concurrently scheduled than the width. As a result, e.g., the XNLP-completeness result with respect to the number of machines plus width in~\cite{BodlaenderGNS22a} actually showed XNLP-completeness parameterized by width.

    However, regarding scheduling with precedence delays, parameters width and maximum delay value~$\ell_{max}$ are typically incomparable. For instance, recall that \textsc{Directed Bandwidth} is equivalent to single machine scheduling of unit-jobs with equal-length maximum precedence delays, where the directed bandwidth target value corresponds to the (maximum) delay value plus one. For general directed acyclic graphs, we can have constant target value and unbounded width with caterpillars --- i.e., a tree with one path that contains all vertices of degree at least three. Conversely, consider a path plus a vertex~$v$ with an arc from the beginning of the path to~$v$ and another arc from~$v$ to the end of the path. Then the width is equal to two, while the target value could be made arbitrarily large by adjusting the length of the path.

    In contrast, for directed trees, the width is a strictly larger parameter than the directed bandwidth target value. Indeed, we propose a topological ordering of the vertices with directed bandwidth value at most $2 \cdot width - 1$. E.g., for downward directed trees, we label vertices by their depth starting from the root, then consider them by non-decreasing depth, and in any order at equal depth. By the definition of the width, there are at most $width$ vertices for each depth value, and the difference between the depth of neighbouring vertices is equal to one. Thus the distances between neighbouring vertices in the proposed topological order is at most~$2 \cdot width - 1$, and w.l.o.g. the target value can be assumed to be at most equal to this value.

\section{Shuffle Product and its Consequences}\label{section:shuffle_product}
    Given a word $s$, its length is denoted by $|s|$, and its $p$th letter in $s$ is denoted by~$s\IDLetter{p}$ ($1 \le p \le |s|$).

    \begin{definition}[Shuffle product]\label{definition:shuffle_product}
        A word $t$ is in the \emph{shuffle product} of words $s_1, \ldots, s_k$ if and only if $t$ can be obtained by interleaving the letters of $s_1, \ldots, s_k$, while keeping the order of the letters from the original words.
        
        Formally, we denote $s_1 \text{ $\shuffleProduct$} s_2 \text{ $\shuffleProduct$} \cdots \text{ $\shuffleProduct$} s_k$ the set of words $t$ for which there is be a bijective mapping $f: \left(\cup_{1 \le i \le k}(\{i\} \times [1, |s_i|])\right)  \rightarrow [1, |t|]$ such that, for all $i \in [1,k]$, sequence $(f(i,p))_{1 \le p \le |s_i|}$ is increasing and, for all $p \in [1, |s_i|]$,  $s_i\IDLetter{p} = t\IDLetter{f(i,p)}$.
    \end{definition}
    \begin{verse}
        \textsc{(Binary) Shuffle Product} \\
        \textbf{Given:} source words $s_1, \ldots, s_k$ and target word $t$ over a (binary) alphabet~$\Sigma$. \\
       \textbf{Parameter:} number of source words $k$. \\
       \textbf{Question:} is $t$ in $s_1 \shuffleProduct s_2 \shuffleProduct \cdots \shuffleProduct s_k$?
    \end{verse}\label{definition:shuffle_product_problem}
    
    Figure~\ref{fig:shuffle_product_example} gives an example. \textsc{Shuffle Product} is NP-hard for unbounded $k$ even over alphabets of size~3 (see Theorem 3.1 by Warmuth and Haussler~\cite{WarmuthHaussler:84:On-the-complexity}), while it is polynomial-time solvable for constant $k$ using dynamic programming~\cite{Mansfield:83:On-the-computational}. Rizzi and Vialette \cite{RizziVialette:13:On-Recognizing} asked about the parameterized
    complexity of \textsc{Shuffle Product}. This was partially answered by van Bevern et al., who showed the W[2]-hardness of \textsc{Binary Shuffle Product}~\cite{vanBevern16}. In the next subsection, we give a definite answer to this open question by showing XNLP-completeness of \textsc{(Binary) Shuffle Product}.

    \begin{figure}[tbp]
        \centering
        \begin{tikzpicture}
            \newcommand{\Vinit}{-0.035}
            \newcommand{\Hstep}{0.6}
            \newcommand{\Vstep}{-0.4}
            \node at (0,\Vinit) {$s_1=$};
            \node at (0,\Vinit+\Vstep) {$s_2=$};
            \node at (0,\Vinit+2*\Vstep) {$s_3=$};
            \node at (0.08,\Vinit+3*\Vstep) {$t=$};
            \node at (2*\Hstep,0) {$c$};
                \draw[->,dashed] (2*\Hstep,0.5*\Vstep) -- (2*\Hstep,2.5*\Vstep);
            \node at (3*\Hstep,0) {$b$};
                \draw[->,dashed] (3*\Hstep,0.5*\Vstep) -- (3*\Hstep,2.5*\Vstep);
            \node at (6*\Hstep,0) {$a$};
                \draw[->,dashed] (6*\Hstep,0.5*\Vstep) -- (6*\Hstep,2.5*\Vstep);
            \node at (7*\Hstep,0) {$a$};
                \draw[->,dashed] (7*\Hstep,0.5*\Vstep) -- (7*\Hstep,2.5*\Vstep);
            \node at (\Hstep,\Vstep) {$a$};
                \draw[->,dashed] (\Hstep,1.5*\Vstep) -- (\Hstep,2.5*\Vstep);
            \node at (4*\Hstep,\Vstep) {$b$};
                \draw[->,dashed] (4*\Hstep,1.5*\Vstep) -- (4*\Hstep,2.5*\Vstep);
            \node at (9*\Hstep,\Vstep) {$c$};
                \draw[->,dashed] (9*\Hstep,1.5*\Vstep) -- (9*\Hstep,2.5*\Vstep);
            \node at (5*\Hstep,2*\Vstep) {$c$};
                \draw[->,dashed] (5*\Hstep,2.4*\Vstep) -- (5*\Hstep,2.5*\Vstep);
            \node at (8*\Hstep,2*\Vstep) {$a$};
                \draw[->,dashed] (8*\Hstep,2.4*\Vstep) -- (8*\Hstep,2.5*\Vstep);
            \node at (\Hstep,\Vinit+3*\Vstep) {$a$};
            \node at (2*\Hstep,\Vinit+3*\Vstep) {$c$};
            \node at (3*\Hstep,\Vinit+3*\Vstep) {$b$};
            \node at (4*\Hstep,\Vinit+3*\Vstep) {$b$};
            \node at (5*\Hstep,\Vinit+3*\Vstep) {$c$};
            \node at (6*\Hstep,\Vinit+3*\Vstep) {$a$};
            \node at (7*\Hstep,\Vinit+3*\Vstep) {$a$};
            \node at (8*\Hstep,\Vinit+3*\Vstep) {$a$};
            \node at (9*\Hstep,\Vinit+3*\Vstep) {$c$};

        \end{tikzpicture}
        \caption{Example of a word $t=acbbcaaac$ in shuffle product $s_1 \text{ $\shuffleProduct$} s_2 \text{ $\shuffleProduct$} s_3$ with source words $s_1 = cbaa$, $s_2 = abc$ and $s_3 = ca$.}
        \label{fig:shuffle_product_example}
    \end{figure}   
    
    \subsection{XNLP-completeness of (Binary) Shuffle Product}\label{subsection:sp_XNLP_completeness}
    
    In this subsection, we show our main result regarding \textsc{Shuffle Product}:
    
    \begin{theorem}\label{theorem:shuffle_product}
        \textsc{(Binary) Shuffle Product} is \textup{XNLP}-complete.
    \end{theorem}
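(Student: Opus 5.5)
Membership and hardness will be handled separately. Hardness is shown for the binary alphabet by a parameterized logspace reduction from \textsc{NNCCM Acceptance}. It transfers to general alphabets trivially, and membership is alphabet-independent.

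\emph{Membership in XNLP.} A nondeterministic machine reads $t$ from left to right. It maintains $k$ pointers $p_1,\dots,p_k$, with $p_i \in [0,|s_i|]$, giving the number of letters of $s_i$ already consumed, together with a pointer into $t$. At each step it guesses an index $i$ with $p_i<|s_i|$ and $s_i\IDLetter{p_i+1}$ equal to the current letter of $t$, then increments $p_i$. It accepts when $t$ is exhausted and every $p_i=|s_i|$. This uses $O(k\log n)$ space and polynomial time, and an accepting run is exactly a bijection $f$ as in Definition~\ref{definition:shuffle_product}.

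\emph{Hardness.} I will strengthen the W[2]-hardness construction of van Bevern et al. The plan is to use one source word per counter, plus a constant number of auxiliary source words, and to encode the counter values by the positions reached inside these source words. Over $\{0,1\}$, the value of counter $c$ will be encoded by how many \emph{blocks} of $s_c$ have been consumed. Each counter word $s_c$ has the form $B_c^{(0)}B_c^{(1)}\cdots B_c^{(n)}$, with blocks separated by a distinctive marker pattern such as $1^{L}$ for a large $L$ (e.g.\ $L > k+2$). The target $t$ is built as a concatenation of $r$ round gadgets $T_1\cdots T_r$. Each $T_i$ lets every counter word advance by an arbitrary number of blocks, which is the increment step. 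Then comes a synchronization pattern forcing every counter word to sit at a block boundary, and finally a check pattern. The check pattern is designed so that it can be absorbed iff it is \emph{not} the case that counter $c_1$ sits at value $n_1$ and counter $c_2$ at value $n_2$.

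For the check, the intended encoding makes block $B_c^{(v)}$ of each counter word carry, for every round $i$, local information about whether $v=n_1$ with $c=c_1$ in check $i$. Since blocks depend on $i$, I will instead make the counter words round-indexed, with one segment per round per value. This costs $O(rn)$ blocks, which is still polynomial and generated in logspace by simple loops over $i,v,c$. The check is then an ``OR'' of two non-conditions: counter $c_1\neq n_1$ or counter $c_2\neq n_2$. I will realize it with a small auxiliary ``choice'' word. The check gadget in $t$ demands a letter pattern that can be supplied either by the segment of $s_{c_1}$, if its current value is not $n_1$, or by that of $s_{c_2}$, if its value is not $n_2$. A filler pattern from an auxiliary word absorbs the other option.

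Correctness splits into two directions. Completeness is straightforward: given an accepting run, interleave following the run. Soundness is the main obstacle, because in a shuffle the letters of $t$ can be matched by any source word, so a word might ``cheat.'' Examples are skipping partial blocks, lagging behind a round, or lending letters meant for one counter to another. To control this I will use counting and padding arguments. Markers of $1^L$ with $L$ larger than all total slack, combined with exact length bookkeeping per round gadget, force every source word to be at a block boundary at each synchronization point. I will first prove this via an invariant: after reading $T_1\cdots T_i$, the total number of consumed letters equals a fixed quantity, and each word's consumption lies in a set of allowed boundary values. I will also need monotonicity, i.e.\ values only increase and stay at most $n$, which comes from the ordered block structure and the finite length of $s_c$. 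The parameter of the output is $k+O(1)$, and all words are produced by nested loops with $O(\log n + \log r)$-bit counters, giving a parameterized logspace reduction.
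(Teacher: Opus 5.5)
\textbf{There is a genuine gap: your membership argument matches the paper's and is fine, but the hardness part is an outline, not a proof.} The two mechanisms it relies on are exactly the ones a shuffle does not give you for free.

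\emph{Synchronization pattern.} A run $1^L$ in $t$ need not come from any particular source word: it can be assembled from pieces of several counter words and your auxiliary words. So $t$ cannot simply force every counter word to ``sit at a block boundary'' at a given point. What a target word enforces is essentially capacity (a block of length $L$ absorbs at most $L$ letters) plus the internal order of each source word. The invariant you announce (after $T_1\cdots T_i$, each word's consumption lies in a set of allowed boundary values) is the whole difficulty, and nothing in the proposal yields it.

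\emph{Check gadget.} It has the wrong polarity. ``$t$ demands a pattern that $s_{c_1}$ or $s_{c_2}$ can supply'' constrains nothing, as long as other words or your fillers can also supply that pattern.

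\emph{Encoding.} This is not pinned down. You start with value $=$ number of consumed blocks, then switch to round-indexed segments. There the value can only be a position \emph{relative} to the current round, so monotonicity between rounds must again be enforced by $t$, and you do not address this.

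\textbf{The paper never synchronizes.} Its argument rests on three ideas, which are what your outline is missing.
\begin{itemize}
\item \emph{Offset encoding.} All words alternate $a$- and $b$-blocks, and $t$ has exactly $2n$ more blocks than each $s_i$. Hence the block offset $f'(i,\pi)-\pi$ of $s_i$ with respect to $t$ is automatically even, non-decreasing and in $[0,2n]$. Half of it is the counter value, so monotonicity and boundedness come for free.
\item \emph{Check as capacity overflow.} The $b$-blocks of $s_i$ have length $1$ or $2$, and the short $b$-blocks of $t$ have length $k+1$. The lengths are chosen so that at offset $2d$ the block of $s_i$ landing on a short block is long iff $(i,d)$ is a checked pair. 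Two long blocks plus one letter from each of the other $k-2$ words then overflow.
\item \emph{Pigeonhole instead of synchronization.} Each check row is repeated $N=2kn+1$ times. Since the total growth of all offsets is at most $2kn$, every check region contains a short $b$-block that no word skips. Pure fillers $a^*$ and $b^*$ absorb the remaining capacity.
\end{itemize}

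The details are genuinely delicate. With the paper's words as written, in the first row of a check region the $bb$'s spilled over from the previous check's last row can meet the $bb$'s of the current check in one long block of $t$. For example, take $k=3$, consecutive checks $(1,0,2,0)$ and $(3,n,1,n)$, and counters $1,1,0$: the paper's completeness mapping then puts six $b$'s into a block of length $5$. So something like an all-short buffer row between check regions is needed.
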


\begin{proof}
    XNLP-membership is straightforward: we propose a non-deterministic algorithm where we memorize the advancement on the target word and each of the source words. For each letter in $t$ in order, we guess which of the source words it corresponds to. Note that we only keep track of position indices in the words, and we do not need to write down any letter in the working space. So, the alphabet size does not appear in the working space, and the latter is in~$\mathcal{O}(k \log(|t|))$.
    
    To show XNLP-hardness, we adapt the W[2]-hardness proof by van Bevern et al.~\cite{vanBevern16}. We reduce from \textsc{NNCCM Acceptance}. We are given an instance~$\mathcal{I}$ with integers $k$, $n$, $r$ and a sequence of $r$ 4-tuples in $[1,k] \times 
    [0,n] \times [1,k] \times [0,n] $. Recall that $k$~is the number of counters and our parameter, $n$~is the maximum value of the counter, $r$~is the number of checks and each 4-tuple represents a check, done in the same order as in the sequence.
    For $j \in [1,r]$, if the $j$th~check is $(c_1,n_1,c_2,n_2)$ then, for all couples $(i,p) \in [1,k] \times [0,n]$, we define:
    
    \begin{equation}
        \bBlockLength{i,j,p} = \begin{cases}
            2 & \text{if } (i,n-p) = (c_1, n_1) \text{ or } (i,n-p) = (c_2, n_2), \\
            1 & \text{otherwise.}
        \end{cases}
    \end{equation}
    
    Given a sequence of words $(w_i)_{1 \le i \le q}$, we denote $(\prod_{i=1}^{q} w_i)$ the concatenated word $w_1\cdots w_q$. If $w_1 = w_2 = \cdots = w_q$, then this word is denoted by~$(w_1)^q$.
    Let $N = 2kn+1$. We propose the following instance~$\mathcal{I}'$ of \textsc{Binary Shuffle Product} with $k+3$~words over $\Sigma = \{a,b\}$:
    
    {\centering
    $\begin{array}{rclrcl}
            \displaystyle s_i & = & \prod_{j=1}^{r} \left(\left(\prod_{p=0}^{n}ab^{\bBlockLength{i,j,p}}  \right)^N\right),  & t & = &  \left(\left(a^kb^{k+2}\right)^n a^kb^{k+1} \right)^{Nr} \left(a^kb^{k+2}\right)^n, \\
            s_{k+1} & = & a^{|t|_a - \sum_{i=1}^{k} |s_i|_a}  \text{, and } & s_{k+2} & = & b^{|t|_b - \sum_{i=1}^{k} |s_i|_b}. \\
    \end{array}$
    }

    Similarly to van Bevern et al.~\cite{vanBevern16}, we define positions and long and short blocks the following way.

    \begin{definition}
        A \emph{block} (resp. a \emph{$c$-block}) in a word $s$ is defined as a maximal consecutive subword using only one letter (resp. only letter~$c \in \Sigma$). A block has position $i$ in $s$ if it is the ith successive block in~$s$.
        
        In $t$, we call $b$-blocks of length~$k+1$ \emph{short} and $b$-blocks of length~$k+2$ \emph{long}. For $i \in [1,k]$, in $s_i$, we call $b$-blocks of length~$1$ \emph{short} and $b$-blocks of length~$2$ \emph{long}.
    \end{definition}

    \begin{remark}\label{rem:nb_blocks}
        Note that $t$ has $2(n+1)Nr + 2n$~blocks and, for $i \in [1,k]$, $s_i$ has exactly $2n$~less blocks --- i.e., $2(n+1)Nr$~blocks. Essentially, we dedicate $2(n+1)N$~blocks to each of the $r$ checks successively.
    \end{remark}

    \begin{figure}[tbp]
        \centering
        \begin{tikzpicture}
            \tikzstyle{word}=[anchor=south west]
            \tikzstyle{grayfill}=[gray, opacity=0.2]
            
            \newcommand{\Hinit}{-0.4}
            \newcommand{\Vinit}{-0.035}
            \newcommand{\Hstep}{0.4}
            \newcommand{\Vstep}{-0.5}
            \node[word] at (\Hinit,\Vinit) {$s_1=$};
            \node[word] at (\Hinit,\Vinit+\Vstep) {$s_2=$};
            \node[word] at (\Hinit,\Vinit+2*\Vstep) {$s_3=$};
            \node[word] at (\Hinit+0.18,\Vinit+3*\Vstep) {$t=$};
            \node[word] at (9*\Hstep,1.5*\Vstep) {\scriptsize $(\ldots)$};
            \node[word] at (19*\Hstep,1.5*\Vstep) {\scriptsize $(\ldots)$};
            \fill[grayfill] (6*\Hstep,-\Vstep) rectangle (7*\Hstep+0.05,3*\Vstep);
            \fill[grayfill] (16*\Hstep,-\Vstep) rectangle (17*\Hstep+0.05,3*\Vstep);
            \node[word] at (\Hstep,0) {$a$};
            \node[word] at (2*\Hstep,0) {$b$};
            \node[word] at (5*\Hstep,0) {$a$};
            \node[word] at (6*\Hstep,0) {$b$};
            \node[word] at (7*\Hstep,0) {$a$};
            \node[word] at (8*\Hstep-0.04,0) {$bb$};
            \node[word] at (11*\Hstep,0) {$a$};
            \node[word] at (12*\Hstep,0) {$b$};
            \node[word] at (13*\Hstep,0) {$a$};
            \node[word] at (14*\Hstep,0) {$b$};
            \node[word] at (15*\Hstep,0) {$a$};
            \node[word] at (16*\Hstep-0.04,0) {$bb$};
            \node[word] at (17*\Hstep,0) {$a$};
            \node[word] at (18*\Hstep,0) {$b$};
            \node[word] at (21*\Hstep-0.04,0) {$bb$};
            \node[word] at (22*\Hstep,0) {$a$};
            \node[word] at (23*\Hstep,0) {$b$};
            \node[word] at (\Hstep,\Vstep) {$a$};
            \node[word] at (2*\Hstep,\Vstep) {$b$};
            \node[word] at (3*\Hstep,\Vstep) {$a$};
            \node[word] at (4*\Hstep,\Vstep) {$b$};
            \node[word] at (5*\Hstep,\Vstep) {$a$};
            \node[word] at (6*\Hstep-0.04,\Vstep) {$bb$};
            \node[word] at (7*\Hstep,\Vstep) {$a$};
            \node[word] at (8*\Hstep,\Vstep) {$b$};
            \node[word] at (11*\Hstep,\Vstep) {$a$};
            \node[word] at (12*\Hstep,\Vstep) {$b$};
            \node[word] at (13*\Hstep,\Vstep) {$a$};
            \node[word] at (14*\Hstep,\Vstep) {$b$};
            \node[word] at (15*\Hstep,\Vstep) {$a$};
            \node[word] at (16*\Hstep,\Vstep) {$b$};
            \node[word] at (17*\Hstep,\Vstep) {$a$};
            \node[word] at (18*\Hstep,\Vstep) {$b$};
            \node[word] at (21*\Hstep,\Vstep) {$b$};
            \node[word] at (\Hstep,2*\Vstep) {$a$};
            \node[word] at (2*\Hstep,2*\Vstep) {$b$};
            \node[word] at (3*\Hstep,2*\Vstep) {$a$};
            \node[word] at (4*\Hstep,2*\Vstep) {$b$};
            \node[word] at (5*\Hstep,2*\Vstep) {$a$};
            \node[word] at (6*\Hstep,2*\Vstep) {$b$};
            \node[word] at (7*\Hstep,2*\Vstep) {$a$};
            \node[word] at (8*\Hstep,2*\Vstep) {$b$};
            \node[word] at (11*\Hstep,2*\Vstep) {$a$};
            \node[word] at (14*\Hstep,2*\Vstep) {$b$};
            \node[word] at (15*\Hstep,2*\Vstep) {$a$};
            \node[word] at (16*\Hstep,2*\Vstep) {$b$};
            \node[word] at (17*\Hstep,2*\Vstep) {$a$};
            \node[word] at (18*\Hstep-0.04,2*\Vstep) {$bb$};
            \node[word] at (21*\Hstep,2*\Vstep) {$b$};
            \node[word] at (22*\Hstep,2*\Vstep) {$a$};
            \node[word] at (23*\Hstep-0.04,2*\Vstep) {$bb$};
            \node[word] at (\Hstep,\Vinit+3*\Vstep) {$a^3$};
            \node[word] at (2*\Hstep,\Vinit+3*\Vstep) {$b^5$};
            \node[word] at (3*\Hstep,\Vinit+3*\Vstep) {$a^3$};
            \node[word] at (4*\Hstep,\Vinit+3*\Vstep) {$b^5$};
            \node[word] at (5*\Hstep,\Vinit+3*\Vstep) {$a^3$};
            \node[word] at (6*\Hstep,\Vinit+3*\Vstep) {$b^4$};
            \node[word] at (7*\Hstep,\Vinit+3*\Vstep) {$a^3$};
            \node[word] at (8*\Hstep,\Vinit+3*\Vstep) {$b^5$};
            \node[word] at (11*\Hstep,\Vinit+3*\Vstep) {$a^3$};
            \node[word] at (12*\Hstep,\Vinit+3*\Vstep) {$b^5$};
            \node[word] at (13*\Hstep,\Vinit+3*\Vstep) {$a^3$};
            \node[word] at (14*\Hstep,\Vinit+3*\Vstep) {$b^5$};
            \node[word] at (15*\Hstep,\Vinit+3*\Vstep) {$a^3$};
            \node[word] at (16*\Hstep,\Vinit+3*\Vstep) {$b^4$};
            \node[word] at (17*\Hstep,\Vinit+3*\Vstep) {$a^3$};
            \node[word] at (18*\Hstep,\Vinit+3*\Vstep) {$b^5$};
            \node[word] at (21*\Hstep,\Vinit+3*\Vstep) {$b^4$};
            \node[word] at (22*\Hstep,\Vinit+3*\Vstep) {$a^3$};
            \node[word] at (23*\Hstep,\Vinit+3*\Vstep) {$b^5$};
            \node[word] at (24*\Hstep,\Vinit+3*\Vstep) {$a^3$};
            \node[word] at (25*\Hstep,\Vinit+3*\Vstep) {$b^5$};
        \end{tikzpicture}
        \caption{Illustration of the reduction with $k=3$ source words, $n=2$ and $r=2$ checks $(1,0,2,0)$ and $(1,1,3,0)$. In the proposed block mapping, gray areas represent counter checks, where counters have values $(1,0,0)$ and $(1,0,1)$ respectively.}
        \label{fig:shuffle_product_reduction}
    \end{figure}   

\begin{oframed}
    \textbf{Intuition.}
    Figure~\ref{fig:shuffle_product_reduction} illustrates the reduction.
    We track the offset in block positions between $t$ and every $s_i$. By Remark~\ref{rem:nb_blocks} and the alternation of $a$-blocks and $b$-clocks in all the words, for each $s_i$ these offsets are even integers in $\{0, \ldots, 2n\}$ and form a non-decreasing sequence. After a division by two, these values are meant to represent the sequence of values for counter $i$.
    
    Checks are performed in sequence, along $2(n+1)N$ consecutive blocks each. A check $(c_1, n_1, c_2, n_2)$ is represented by a short $b$-block in which no `block skip' happens --- i.e., every $s_i$ contributes at least a letter to this block. Since short $b$-blocks in~$t$ are of length~$k+1$, source words $s_{c_1}$ and $s_{c_2}$ cannot contribute both a long $b$-block of length~two to this short $b$-block. This reflects that we cannot have both $c_1 = n_1$ and $c_2 = n_2$ during the corresponding check of the counter machine.
    Note that the existence of such a short $b$-block for each check is ensured by setting $N$ to a sufficiently large value --- i.e., more than the total number of possible block skips which, by Remark~\ref{rem:nb_blocks}, is $2kn$.
\end{oframed}

\begin{lemma}\label{lemma:shuffle_product_XNLP_1}
        Let $t, s_1, \ldots, s_{k+2}$ be the words in instance~$\mathcal{I}'$. If there is an accepting run for instance~$\mathcal{I}$, then $t$ is in $s_1 \shuffleProduct s_2 \shuffleProduct \cdots \shuffleProduct s_{k+2}$.
    \end{lemma}

    \begin{proof}
        Suppose that instance~$\mathcal{I}$ has an accepting run. For $i \in [1,k]$ and $j \in [1,r]$, let $p_{i,j}$ be the value of counter~$i$ when the $j$th check is performed. We show that $t$ is in the shuffle product of $s_1, \ldots, s_{k+2}$. First, we provide a mapping of the blocks in each $s_i$ to the blocks in $t$. We encode the value of the counters with matching position offsets between these blocks. For $i \in [1,k]$ and $j \in [1,r]$, all~$2(n+1)N$~blocks associated to the $j$th check have a position offset of $2p_{i,j}$ between $s_i$ and~$t$. Formally, for $\pi \in [2(n+1)N(j-1), 2(n+1)Nj - 1]$, we map the block at position~$\pi$ in $s_i$ to the block at position~$\pi+ 2p_{i,j}$ in~$t$. Then, we consider the blocks in $t$ successively and assign letters the following way: we first assign the letters of the block from $s_1$ (if there is one) in order, then from $s_2$, and so on until $s_k$. If some letters remain unassigned in the block in $t$, we assign letters from $s_{k+1}$ or $s_{k+2}$ in order. Let $f_{\mathcal{I}}$ be the resulting letter assignment.

        We show that $f_{\mathcal{I}}$ is well defined and meets all requirements from Definition~\ref{definition:shuffle_product}. First, note that values~$p_{i,j}$ are based on an accepting run of instance~$\mathcal{I}$. As such, for $i \in [1,k]$, $(p_{i,j})_{1 \le j \le r}$ is non-decreasing and the sequence of block position assignments from $s_i$ to $t$ is increasing. Furthermore, among the blocks assigned to the same block in $t$, this implies that there is at most one coming from each of the words $s_1, \ldots, s_k$. Knowing this, it is clear that we never assign more letters than are available for each $a$-block and long~$b$-block in~$t$. E.g., with long~$b$-blocks, by the definition of values $\bBlockLength{i,j,p}$, for every check at most two of the source words $s_1, \ldots, s_k$ can assign a $b$-block with two letters, and all the others can only assign a $b$-block with one letter. Thus, at most $k+2$~letters are assigned to each long~$b$-block in~$t$.
        
        Now, when it comes to the short $b$-blocks in $t$, we show that at most one assigned block from $s_1, \ldots, s_k$ has two letters. This is ensured by the fact that values $p_{i,j}$ are based on an accepting run of instance~$\mathcal{I}$. Indeed, by contradiction, suppose that more than $k+1$ letters are assigned to a long $b$-block in~$t$ associated to the $j$th check $(c_1,n_1,c_2,n_2)$, $j \in [1,r]$. Then, by the definition of values $\bBlockLength{i,j,p}$, there are two $b$-blocks with two letters assigned to it from source words $s_{c_1}$ and~$s_{c_2}$, with a position offset of $2p_{c_1, j}$ and $2p_{c_2, j}$ respectively. Again, by the definition of values $\bBlockLength{i,j,p}$, we deduce that $p_{c_1, j} = n_1$ and $p_{c_2, j} = n_2$, which would mean that the accepting run fails on the $j$th check.

        Consequently, the proposed letter assignment $f_{\mathcal{I}}$ is well defined, injective and such that, for all $i \in [1,k]$, $(f_{\mathcal{I}}(i,p))_{1 \le p \le |s_i|}$ is increasing. Finally, by the definition of $s_{k+1}$ and $s_{k+2}$, we conclude that $f_{\mathcal{I}}$ is bijective, and thus that $t$ is in the shuffle product of $s_1, \ldots, s_{k+2}$. \qed
    \end{proof}

    \begin{lemma}\label{lemma:shuffle_product_XNLP_2}
        Let $t, s_1, \ldots, s_{k+2}$ be the words in instance~$\mathcal{I}'$. If $t$ is in the shuffle product of $s_1, \ldots, s_{k+2}$, then there is an accepting run for instance~$\mathcal{I}$. 
    \end{lemma}
    \begin{proof}
        Let $f$ be a bijective mapping from the letter positions of words $s_1, \ldots, s_{k+2}$ to the letter positions of word~$t$ accordingly to Definition~\ref{definition:shuffle_product}. Let $f'$ be the corresponding mapping which, for each couple $(i, \pi)$ representing a word $s_i$ with $i \in [1,k]$ and a block position in $s_i$, considers the first letter of the corresponding block, and returns the position of the block in~$t$ in which this letter is assigned by~$f$. Then, by the definition of~$f$ and the alternation of $a$-blocks and $b$-blocks in our words, for all $i \in [1,k]$, sequence $(f'(i, \pi))_{1 \le \pi \le 2(n+1)Nr}$ is increasing.

        We build an accepting run from the offsets of block positions between source words~$s_1, \ldots, s_k$ and target word~$t$. Given $i \in [1,k]$ and $\pi \in [1,2(n+1)Nr]$, let: 
        \begin{equation}
            \delta(i,\pi) = f'(i, \pi) - \pi
        \end{equation}
        By Remark~\ref{rem:nb_blocks} and the alternation of $a$-blocks and $b$-blocks in our words, we know that ${ \delta(i, \pi) \in \{0, 2, \ldots, 2n\} }$. Plus, by the previous paragraph, for all $i \in [1,k]$, $(\delta(i, \pi))_{1 \le \pi \le 2(n+1)Nr}$ is a non-decreasing sequence. As such, we intend to describe the values of each counter $i \in [1,k]$ in a run of instance~$\mathcal{I}$ with a subsequence of $(\delta(i, \pi)/2)_{1 \le \pi \le 2(n+1)Nr}$.

        In order to define such subsequences, we first define $r$ positions of short $b$-blocks in~$t$, denoted by $\pi_1, \ldots, \pi_r$, where no \emph{block skip} occurs --- i.e., an index $i \in [1,k]$ and a block position $\pi$ such that $\delta(i, \pi) < \delta(i, \pi+1)$. In other words, for all $j \in [1,r]$, for all $i \in [1,k]$, there must be $\pi^{(i,j)}$ such that ${ f'(i,\pi^{(i,j)}) = \pi_j }$ and $\delta(i,\pi^{(i,j)}) = \delta(i,\pi^{(i,j)} + 1)$.  Let $j \in [1,r]$. Consider the interval of block positions in~$t$: ${I_j = [2(n+1)(N(j-1) + 1), 2(n+1)Nj] }$. There are $N$ short $b$-blocks, and we claim that at least one of them is not involved in a block skip. Indeed, by the properties of $\delta(i, \pi)$ established in the previous paragraph, there are at most $2kn$ block skips which can occur in total. Since $N > 2kn$, by the pigeonhole principle, there is at least one short $b$-block in~$t$ with a position in~$I_j$ where no block skip occurs. We set $\pi_j$ as the position of one such short $b$-block. Then, for all couples $(i,j) \in [1,k] \times [1,r]$, we set $\pi^{(i,j)}$ as the unique block position in~$[1,2(n+1)Nr]$ in~$s_i$ such that $f'(i,\pi^{(i,j)}) = \pi_j$.

        Now, for instance~$\mathcal{I}$, we propose the run where, for all $(i,j) \in [1,k] \times [1,r]$, counter~$i$ has value $\delta(i, \pi^{(i,j)})/2$ during the $j$th check. We claim that this is an accepting run. Since we already established that this defines a non-decreasing sequence of values in $[0,n]$ for each counter, it remains to verify that all checks are met. Let $(c_1, n_1, c_2, n_2)$ be the $j$th check. By contradiction, suppose that $\delta(c_1, \pi^{(c_1,j)})/2 = n_1$ and $\delta(c_2, \pi^{(c_2,j)})/2 = n_2$. Then, by the definitions of the $\pi^{(i,j)}$ and $\bBlockLength{i,j,p}$ values, we have $f'(c_1,\pi^{(c_1,j)}) = f'(c_2,\pi^{(c_2,j)}) = \pi_j$, and both $b$-blocks, from $s_{c_1}$ and $s_{c_2}$ respectively, are long. Since no block skip occurs at position $\pi_j$, at least $k+2$ letters are assigned to the corresponding short $b$-block in~$t$, which leads to a contradiction. We conclude that all checks are met and the proposed run is indeed accepting for instance~$\mathcal{I}$.\qed
    \end{proof}

We now can conclude Theorem~\ref{theorem:shuffle_product}.
\end{proof}

\subsection{Consequences}\label{subsection:sp_consequences}

    In this subsection, from the XNLP-completeness of \textsc{Binary Shuffle Product}, we derive several implications to scheduling with precedence constraints.

    First, we consider \SchedulingBaseProblem{}, denoted by $P|prec, p_j=1|C_{max}$ in the three-field notation. This problem is NP-complete~\cite{Ullman:75:NP-complete}, and it was recently shown to be XNLP-complete parameterized by the number of machines~$m$ plus width~\cite{BodlaenderW20}. Although this implies XNLP-hardness parameterized by $m$, it is open whether the problem is NP-hard for constant~${m \ge 3}$~\cite{Ullman:75:NP-complete}.
    
    In contrast, when $p_j$ or $size_j$ are allowed to be non-unit, the problem is known to be NP-complete for a constant number of machines~\cite{BlazwiczLiu:96:Scheduling,Ullman:75:NP-complete}. In fact, van Bevern et al. showed that problems ${P3|prec, p_j=1, size_j\in\{1,2\}|C_{max} \le D}$ and ${P2|prec, p_j\in\{1,2\}|C_{max} \le D}$ are W[$2$]-hard parameterized by width (see Section~2.3 in~\cite{vanBevern16}). They first proved the W[$2$]-hardness of \textsc{Binary Shuffle Product}, then proposed reductions to both scheduling problems. It is not hard to see that these reductions only require logarithmic working space and thus can be reused as is to infer XNLP-hardness. Furthermore, XNLP-membership can be derived from straightforward adaptations of the algorithm proposed in the full version of~\cite[Section 4.3]{BodlaenderW20}.

    \begin{corollary}\label{corollary:consequence_parallel}
        Scheduling problems ${P3|prec, p_j=1, size_j\in\{1,2\}|C_{max} \le D}$ and ${P2|prec, p_j\in\{1,2\}|C_{max} \le D}$ are XNLP-complete parameterized by width.
    \end{corollary}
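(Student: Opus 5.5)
Membership and hardness are proved separately.

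\textbf{Hardness.} We compose two reductions. The first is Theorem~\ref{theorem:shuffle_product}, that \textsc{Binary Shuffle Product} is XNLP-hard parameterized by the number $k$ of source words. The second is the pair of reductions of van Bevern et al.~\cite{vanBevern16} from \textsc{Binary Shuffle Product} to the two scheduling problems. In those reductions:
\begin{itemize}
\item each source word $s_i$ becomes a chain of jobs, with letters encoded by processing time ($p_j\in\{1,2\}$) or by size ($size_j\in\{1,2\}$);
\item the target word $t$ becomes one further chain that fixes the free machine capacity in each time slot.
\end{itemize}
The precedence graph is therefore a disjoint union of $k+O(1)$ chains, so the width is $k+O(1)$. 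This is a valid parameter bound $k'\le g(k)$.

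We then check that the construction is a parameterized logspace reduction. Each output job and arc is determined by an index into one input word plus the letter at that position. The instance can thus be produced by a loop over positions using only $O(\log n)$ bits of counters, in polynomial time. The equivalence of instances is already proved in~\cite{vanBevern16}. Composing with Theorem~\ref{theorem:shuffle_product}, which is itself logspace, gives XNLP-hardness, since parameterized logspace reductions compose.

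\textbf{Membership.} We adapt the nondeterministic algorithm of~\cite[Section 4.3]{BodlaenderW20}. By Dilworth's theorem, the poset of $G_{prec}$ decomposes into $w$ chains, where $w$ is the width. We would rather not compute this decomposition in logspace, so instead we guess the schedule time step by time step while storing only a compact state:
\begin{itemize}
\item the current time $\tau\le D$, where w.l.o.g.\ $D\le\sum_j p_j\le 2n$;
\item the antichain of jobs that are currently running or minimal among the unscheduled jobs, recorded as at most $O(w)$ job identifiers, each with its remaining processing time in $\{0,1,2\}$.
\end{itemize}
The scheduled jobs form a down-set of the poset, and a down-set is determined by its set of maximal elements. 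That set is an antichain of size at most $w$, so the state takes $O(w\log n)$ bits.

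In each step we nondeterministically pick jobs to start at time $\tau$. The selection must respect the machine bound, $3$ units of size or $2$ machines. Each chosen job must have all its predecessors completed; we test this in logspace by checking that every predecessor lies in the stored completed down-set, using a reachability-free comparison against the antichain. Concretely, a job $u$ lies below the antichain if $u$ is one of the antichain elements or is a predecessor of one. Deciding this requires directed reachability, which is in NL and hence fine for XNLP, since nondeterministic logspace subroutines compose within XNLP.

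\textbf{Main obstacle.} We expect the hard part to be the bookkeeping of membership: maintaining the completed down-set via its $\le w$ maximal elements, including the jobs currently in progress. This is where the argument of~\cite{BodlaenderW20} has to be extended to processing time $2$ and to sizes. The change is mild, since at most $3$ jobs run at once and a constant number of bits records their progress.
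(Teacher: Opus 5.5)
Your proposal follows the paper's proof: hardness by composing Theorem~\ref{theorem:shuffle_product} with the reductions of van Bevern et al.~\cite{vanBevern16}, which are logspace and keep the width bounded in $k$, and membership by adapting the nondeterministic algorithm of~\cite{BodlaenderW20}, which the paper itself only cites. Your membership sketch switches between storing the minimal unscheduled jobs and storing the maximal started jobs; commit to the maximal started jobs, since the other choice forces non-reachability tests, and then the antichain update needs only guessed paths (delete a stored job once you have verified a path from it to the newly started job, and reject whenever more than $w$ jobs are stored).
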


    Now, we consider scheduling with minimum precedence delays. On a single machine, the problem is already NP-complete with unit-time jobs and equal-length precedence delays --- which, in our extended three-field notation, can be denoted by $1|prec(\ell^{min}), \ell_{i,j} = \ell, p_j=1|C_{max} \le D$ ~\cite{LeungVornberger:84:On-some-variants} ---, albeit it can be solved in polynomial time if $G_{prec}$ is an upward (or downward) directed forest~\cite{BrunoJonesSo:80:Deterministic}.
    
    Regarding (maximum) delay value~$\ell$ as parameter, the situation is similar to the parallel identical machine parameterized by the number of machines. In~\cite{MallemHanenMunier:25:Single}, it was shown that $1|prec(\ell^{min}), \ell_{i,j} = \ell, p_j=1|C_{max} \le D$ is XNLP-complete parameterized by $\ell$ plus width. Although this implies XNLP-hardness parameterized by $\ell$, it is open whether the problem is NP-hard for constant~$\ell \ge 3$.
    However, when processing times can be equal to one or two, the problem is known to be NP-complete for constant $\ell \ge 2$~(see Section~8.2.4 in~\cite{Engels:00:Scheduling}); of note that the reduction required the width to be unbounded.

    From Corollary~\ref{corollary:consequence_parallel}, we can deduce that the restriction to constant~$\ell \ge 3$ and processing times in~$\{1,2\}$ is XNLP-complete parameterized by width. Indeed, starting from problem ${P3|prec, p_j=1, size_j\in\{1,2\}|C_{max} \le D}$, the reduction from \SchedulingBaseProblem{} proposed in~\cite[Section~5]{MallemHanenMunier:25:Single} can be easily adapted to the case of jobs with non-unit~$size_j$: we reduce in the same manner, i.e., every time unit in the parallel machine problem corresponds to a time interval of length~$m$ in our single machine instance with minimum precedence delays. Then, jobs of size~$size_j$ in the original instance are turned into jobs of processing time~$p'_j = size_j$. Plus the delay value is $\ell = m = 3$, and jobs of size~two are turned into jobs of processing time~two. This shows XNLP-hardness. Finally, note that XNLP-membership was already established in~\cite[Section~4]{MallemHanenMunier:25:Single}.
    
    \begin{corollary}\label{corollary:consequence_min_delays}
        Scheduling problem ${1|prec(\ell^{min}), \ell_{i,j}=3, p_j\in\{1,2\}|C_{max} \le D}$ is XNLP-complete parameterized by width.
    \end{corollary}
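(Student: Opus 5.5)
Membership and hardness are handled separately. For membership, I will invoke the XNLP-membership result of \cite[Section~4]{MallemHanenMunier:25:Single}. If that result is stated only for unit processing times, I will adapt the nondeterministic algorithm. It scans time from $0$ to $D-1$ and keeps, for each of the at most $width$ chains of a chain decomposition of $G_{prec}$, the index of the next unscheduled job. It also records, for the last $O(\ell + \max_j p_j) = O(1)$ time units, which jobs were started and when. With $p_j \le 2$ and $\ell = 3$, only a constant window of history is needed to check that the machine is free and that every minimum delay $f(i)+p_i+\ell \le f(j)$ holds. Each recorded job is stored as (chain index, position), so the working space is $O(width \cdot \log n)$. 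The running time is polynomial since $D$ can be assumed at most $n\cdot(2+\ell)$.

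For hardness, I reduce from ${P3|prec, p_j=1, size_j\in\{1,2\}|C_{max} \le D}$, which is XNLP-hard parameterized by width by Corollary~\ref{corollary:consequence_parallel}. I follow the construction of \cite[Section~5]{MallemHanenMunier:25:Single}. Each unit time slot $t$ of the $3$-machine instance becomes a window $[3t, 3t+2]$ on the single machine. The target becomes $D' = 3D$, possibly shifted by a constant padding. Every precedence arc receives a minimum delay of $\ell = m = 3$. A job of size $1$ becomes a job of processing time $1$, and a job of size $2$ becomes a job of processing time $2$. Any padding gadgets used in the original construction to force the window structure (e.g.\ filler chains) are added. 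They contribute only a constant to the width, so the new width is bounded by the old width plus a constant. The construction is clearly computable in logarithmic space.

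Correctness has two directions. Forward: given a parallel schedule, inside each window I place the jobs of slot $t$ consecutively, with total length $\sum size_j \le 3$. An arc $ij$ with $f(i) < f(j)$ then yields $f'(j) \ge 3f(j) \ge 3f(i)+3 \ge f'(i)+p'_i+\ell - (\text{slack})$. I must verify this precisely, placing size-$2$ jobs first or last as the source construction dictates so that the delay of $3$ is met. Backward: given a single-machine schedule, I assign each job to the slot $\lfloor f'(j)/3 \rfloor$. Jobs in one slot occupy at most $3$ time units, giving capacity $\le 3$. A delay of $3$ forces successors to start at least $p'_i+3 \ge 4$ after $f'(i)$, so they land in a later window.

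The main obstacle is the boundary issue in both directions. A length-$2$ job may straddle two windows in the backward direction. In the forward direction, a job at the end of a window followed by its successor at the start of the next window violates the delay. I would handle this as the original reduction does, by keeping the delay equal to the window length and arguing via a normalization (left-shifting) that some optimal schedule aligns jobs within windows. If needed, I would use the filler chains of the original gadget to block straddling.
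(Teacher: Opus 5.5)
Your overall route is the paper's. Membership comes from \cite[Section~4]{MallemHanenMunier:25:Single}. Hardness comes from adapting the reduction of \cite[Section~5]{MallemHanenMunier:25:Single}, starting from ${P3|prec, p_j=1, size_j\in\{1,2\}|C_{max} \le D}$ (Corollary~\ref{corollary:consequence_parallel}), with $p'_j = size_j$ and $\ell = m = 3$. The paper's own argument is no more detailed than that.

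\textbf{The concrete layout fails.} The construction you commit to is not a valid reduction: windows $[3t,3t+2]$ placed back to back, $D' = 3D$ up to an additive constant, and the same precedence graph. The ``slack'' in your forward inequality also cannot be removed by left-shifting. In a back-to-back layout, a successor in window $t+1$ must start at an offset at least the completion offset of its predecessor in window $t$. Parallel schedules need not admit such consistent offsets. Concretely, take $D$ layers of three size-$1$ jobs, where each job of a layer precedes every job of the next layer; the width is $3$. Three machines finish by time $D$. On one machine with minimum delay $3$, however, every layer occupies $3$ time units and is followed by a forced gap of $3$. So every schedule has makespan at least $6D-3$, which exceeds $3D+O(1)$ for large $D$.

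\textbf{What the construction must provide.} The windows of length $m$ must be separated by stretches of at least $\ell$ time units that original jobs cannot use. Then one parallel time unit costs at least $m+\ell = 6$ single-machine time units, and three things hold:
\begin{itemize}
\item a completion in window $t$ is automatically at distance at least $\ell$ from any start in a later window;
\item two related jobs cannot share a window, since the successor starts at least $p'_i+\ell \ge 4$ after the predecessor;
\item a processing-time-$2$ job cannot straddle two windows, which is exactly what your backward capacity argument needs.
\end{itemize}

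These stretches must be blocked by filler jobs that a tight deadline forces into place; leaving them idle is not enough. Six independent chains of $D$ unit jobs form a no-instance on three machines. Yet with delay $3$ and deadline $6D$, a single machine processes them round-robin with spacing $6$.

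So the filler gadget you list as optional is essential, and the period is about $2m$, not $m$. Once this separation-and-blocking structure is imported from the original reduction, both your forward placement and your assignment of each job to the window containing its start go through.
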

    
    This improves upon the results in~\cite{Engels:00:Scheduling,MallemHanenMunier:25:Single} about single machine scheduling with equal-length minimum precedence delays.

\section{Directed Bandwidth}\label{section:directed_bandwidth}
    
    This section shows that \textsc{Directed Bandwidth} is XNLP-complete. More precisely, we give
    two results: we show that \textsc{Directed Bandwidth} parameterized by the target bandwidth and the width is complete
    for XNLP, and that \textsc{Directed Bandwidth} parameterized by the target bandwidth is complete for XNLP when the input is a downward directed tree or an upward directed tree. 
    
    A directed graph $G=(V,A)$ is \emph{acyclic} (i.e., a \emph{DAG}), if it has no directed cycle.
    A \emph{topological ordering} of a directed acyclic graph $G=(V,A)$ with $|V|=n$ is a bijective function
    $f: V \rightarrow [0,n-1]$, such that for each arc $vw\in A$: $f(v)<f(w)$.
    The \emph{bandwidth} of a topological ordering $f$ is $\max_{vw\in A} f(w)-f(v)$. The
    \emph{directed bandwidth} of a directed acyclic graph $G$ is the minimum bandwidth
    over all topological orderings of $G$.
    The \emph{width} of a directed acyclic graph $G=(V,A)$ is the maximum size of a set $S\subseteq V$,
    such that there is no pair of vertices $v,w\in S$, $v\neq w$ with a directed path from $v$ to $w$ in $G$, i.e., it is the maximum size of an antichain in the partial order that is formed by the closure of $G$.
       
    In the \textsc{Directed Bandwidth} problem, given a directed acyclic graph $G=(V,A)$,
    and an integer $k$, we ask whether $G$ has directed bandwidth at most $k$. \label{definition:directed_bandwidth}
    The \textsc{Directed Bandwidth} problem is the natural directed variant of the well-studied
    \textsc{Bandwidth} problem, where we ask for an undirected graph $G=(V,E)$ and integer $k$,
    if there is a bijective mapping $f:V \rightarrow [0,n-1]$, with for each $\{v,w\}\in E$: $|f(v)-f(w)|\leq k$. 
    Already in 1978, Garey et al.~\cite{GareyGJK78} showed that \textsc{Directed Bandwidth} is
    NP-complete, for directed trees with no vertices of indegree more than two, and they
    showed that the problem is polynomial for $k=2$. In this paper, we focus on its parameterized complexity. 
    \textsc{Bandwidth} was shown first to be in XP by Saxe~\cite{Saxe80}, with a faster variant
    given by Gurari and Sudborough \cite{GurariS84}; these algorithms can be easily modified to give algorithms for \textsc{Directed Bandwidth} with the same running time. In~\cite{Bodlaender:21:Parameterized}, it was shown that \textsc{Bandwidth} and \textsc{Directed Bandwidth} are both W[$t$]-hard for all~$t$, even if the input graph (resp. the undirected version of the input graph) is a caterpillar --- i.e., a tree with one path that contains all vertices of degree at least three. Later, in~\cite{BodlaenderGNS22a}, it was shown that \textsc{Bandwidth} is complete for XNLP, again over caterpillars.
    
    As observed by Mallem~\cite[Section~3.1]{Mallem:24:Parameterized}, \textsc{Directed Bandwidth} is equivalent to scheduling jobs with unit times on one processor with equal-length maximum precedence delays - denoted by $1|prec(\ell^{max}), \ell_{i,j} = \ell, p_j = 1|C_{max} \le D$ in our extended three-field notation for scheduling problems.
    Mallem poses as an open problem whether \textsc{Directed Bandwidth} is
    XNLP-complete parameterized by the target bandwidth; this paper solves this open problem, even for directed trees.
    
    Below, in Section~\ref{subsection:dbdags}, we first give the complexity result for
    directed acyclic graphs with target bandwidth and width of the DAG as parameter. Then,
    in Section~\ref{subsection:dbtrees}, we show how we can modify this proof to show XNLP-hardness for \textsc{Directed Bandwidth} for trees.
    Our proofs are inspired by the XNLP-completeness proof for the undirected case from \cite{BodlaenderGNS22a}.

\subsection{Directed Bandwidth for Directed Acyclic Graphs}
\label{subsection:dbdags}

    In this subsection, we show the following result.
    
    \begin{theorem}
        \textsc{Directed Bandwidth} is \textup{XNLP}-complete for directed acyclic graphs, with as parameter target value plus width.
        \label{theorem:dbdag}
    \end{theorem}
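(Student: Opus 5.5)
Membership in XNLP and hardness are shown separately; hardness is by a parameterized logspace reduction from \textsc{NNCCM Acceptance}, following the structure of the XNLP-hardness proof for \textsc{Bandwidth} in~\cite{BodlaenderGNS22a}.

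\emph{Membership.} A nondeterministic machine guesses the topological ordering $f$ one position at a time, from left to right. It stores a sliding window containing the last $k$ placed vertices; each is written as an index, so this costs $O(k\log n)$ bits. It also stores the antichain of ``frontier'' vertices: the minimal vertices among those not yet placed. When it places a vertex $v$, it checks three things in logspace: every in-neighbour of $v$ has already been placed; every in-neighbour lies inside the current window; and no vertex that leaves the window still has an unplaced out-neighbour. To know whether a vertex is already placed, it uses the frontier: a vertex is placed exactly when it is not reachable from, and not equal to, any frontier vertex. Reachability here is a nondeterministic-logspace subroutine, and its complement is also in NL by Immerman--Szelepcs\'enyi, so the test fits the space budget. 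Since the frontier is an antichain, it has at most $width$ elements. The total space is therefore $O((k+width)\log n)$.

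\emph{Hardness.} I would first build a rigid ``spine'': a long directed path $P$, thickened with chords $x_i x_{i+k}$ so that in every ordering of bandwidth at most $k$ consecutive spine vertices land in prescribed relative positions. Taking chords of length exactly $k$ forces each $x_i$ and $x_{i+k}$ to be exactly $k$ apart, which effectively pins the spine. The spine then leaves a bounded number of free slots per unit length that other vertices must fill. For each counter $i\in[1,k']$ of the NNCCM, I would add a \emph{counter path} $Q_i$, which is a directed path attached to the spine at its start and at its end by arcs. I would arrange the lengths so that $Q_i$ carries $n$ units of slack relative to the spine. The counter value is then represented by how far $Q_i$ lags behind the spine in the ordering. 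Because all arcs point forward and each $Q_i$ is a path, the lag can only increase along the ordering, which models the monotone counters. The DAG is the spine plus $k'$ paths, so its width is $O(k')$, and the target bandwidth is $O(k')$.

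\emph{Checks.} For each check $(c_1,n_1,c_2,n_2)$ I would reserve a long segment of the spine. This mirrors the role of $N$ in the shuffle-product reduction: taking the segment longer than the total possible lag change $k'n$ ensures that some sub-segment contains no lag change at all. Inside it, the vertices of $Q_{c_1}$ at offset $n-n_1$ and of $Q_{c_2}$ at offset $n-n_2$ receive extra pendant gadgets (short paths hanging off the vertex, or one extra vertex per position as in the $\ell_{i,j,p}$ trick). These gadgets consume free slots near a common spine window. There are just enough slots for one of the two gadgets but not both, so both counters matching the check overflows the bandwidth-$k$ window. This is the analogue of the short $b$-block of length $k+1$. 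Completeness is then shown by placing everything according to the accepting run. Soundness is shown by reading the lag values off at a lag-change-free position in each check segment, via pigeonhole, exactly as in Lemma~\ref{lemma:shuffle_product_XNLP_2}. The construction is regular, so it is computable in $O(\log n)$ space.

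The main obstacle I expect is the rigidity and slot-counting argument: proving that every ordering of bandwidth at most $k$ places the spine in the canonical pattern, and that lag is well defined and monotone, while keeping the width bounded. In the undirected setting this rigidity comes from cliques. Cliques still work here, with their arcs oriented transitively, but they force a careful count of how many non-spine vertices can occupy each window. I would first try spine ``cliques'' of size $k-k'$ placed periodically, leaving exactly $k'$ free slots per period, one per counter path, plus one spare slot that only a single check gadget can use.
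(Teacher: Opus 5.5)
Your membership argument is correct. It is more elaborate than the paper's, which only guesses the next vertex and keeps the last $k$ placed vertices, but storing a frontier antichain of size at most the width is legitimate for this parameter.

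The hardness sketch, however, fails at the step you flag as the main obstacle. Take a spine path $x_0\to x_1\to\cdots$ with chords $x_ix_{i+k}$. The path already forces $f(x_{i+k})-f(x_i)\ge k$, and the chord forces $f(x_{i+k})-f(x_i)\le k$. So $x_i,\ldots,x_{i+k}$ occupy $k+1$ \emph{consecutive} positions, and the spine becomes one contiguous block with no free slots. Not even the first vertex of a counter path attached to $x_0$ can then be placed within distance $k$ of $x_0$, so every instance would map to a no-instance. Your fallback with periodic transitive cliques is only sketched. It also does not say which vertex occupies a slot that is not used by an increment or a check gadget in a given period; an ordering is a bijection, so there are no empty positions.

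The missing idea is the paper's global counting argument. $G$ has a unique source $v_{f,0}$ and a unique sink $v_{f,L}$, joined by a floor path with $L$ arcs. A \emph{filler path} from $v_{f,0}$ to $v_{f,L}$ gets exactly the length that makes $|V(G)|=BL+1$. This forces $f(v_{f,i})=Bi$, so every batch has exactly $B-1=k+5$ positions. Each counter path and the filler path runs from $v_{f,0}$ to $v_{f,L}$, so every batch contains at least one vertex of each, leaving at most four positions for between vertices. In the completeness direction, the filler path, whose pace is flexible, absorbs every position left unused.

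Second, your width bound does not hold for the gadgets you describe. A pendant vertex (or pendant path) hanging off a counter-path vertex has no out-arcs. The pendant vertices of different checks are therefore pairwise incomparable and form an antichain whose size grows with the number of checks. Since width is part of the parameter, the reduction is then not parameter-preserving.

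The paper instead makes every gadget vertex a \emph{between} vertex, with an in-arc from a path vertex and an out-arc to its successor on the same path. These are the floor-between vertices on the floor and the counter-between vertices on the counter paths. This keeps the width at $k+5$. It also confines each gadget vertex to a known batch, which is what makes the overflow countable: for check $\alpha$, batches $\alpha S$ and $\alpha S+1$ contain $3+4$ floor-between vertices, so two counter-between vertices would give $9$ between vertices where there is room for only $8$. In the tree section the paper drops exactly these out-arcs and notes that they serve only to keep the width bounded.

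Your alternative of ``one extra vertex per position'' works only if it is this kind of parallel vertex. Your claim that there is room for one gadget but not both still needs such an explicit slot count. That count is much harder to obtain when a pendant vertex may sit anywhere in the $k$ positions after its parent.
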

    
    Membership can be shown the same way as membership for the undirected variant of the problem (e.g., see~\cite{BodlaenderGNS22a}): with target value $k$, repeatedly guess the next vertex in the ordering, and keep the last $k$ vertices in the ordering in memory.
    
\begin{oframed}\textbf{Intuition.}
    To show XNLP-hardness, we use a transformation from \textsc{NNCCM Acceptance}.
    Given an instance of \textsc{NNCCM Acceptance} with $k$~counters, we build a directed acyclic graph $G$ with width $k+O(1)$, such that $G$ has directed bandwidth at most $k+6$, if and only if the NNCCM has an accepting run. We have a properly chosen value $L$.
    $G$ has $(k+6)L+1$ vertices with
    exactly one vertex $v_{f,0}$ of indegree 0, and
    exactly one vertex $v_{f,L}$ of outdegree 0. $v_{f,0}$ must be the first, and $v_{f,L}$ the last vertex in the ordering. Then, we have a \emph{floor path} with length exactly $L$ from 
    $v_{f,0}$ to $v_{f,L}$, which means that every
    $(k+6)$th~vertex in the ordering must be a vertex in the floor path. Between each pair of successive vertices of this floor path, we thus have $L-1$ positions, which we call a \emph{batch}. 

    There are $k+1$ other paths from
    $v_{f,0}$ to $v_{f,L}$. $k$ of these paths represent a
    counter; each counter gadget path has length $n+L$. Between each pair of successive floor
    path vertices, we must have at least one vertex
    of each counter gadget path. The increase of a 
    counter is modelled by having more than one vertex of the counter gadget path between two
    successive floor path vertices --- thus a value of a counter is the difference between the number of steps of the counter gadget path and the floor path.

    We add additional vertices, which we call \emph{between} vertices. The main idea behind the construction is that, if the increase of counters models an accepting run of the NNCCM, then the batches have enough space to place all between vertices; but if the run rejects, then there is a batch modelling the check that halts the machine where we do have insufficient space to fit all between vertices.
    
    The last path is the \emph{filler path}: this path has precisely the length to ensure that the
    total number of vertices equals $(k+6)L+1$ --- this ensures that between each pair of successive floor path vertices, the distance is exactly~$L$.
    
    Figure~\ref{fig:db-dag} illustrates part of the construction.
\end{oframed}
    
    \begin{figure}
            \centering
            \includegraphics[width=1\linewidth]{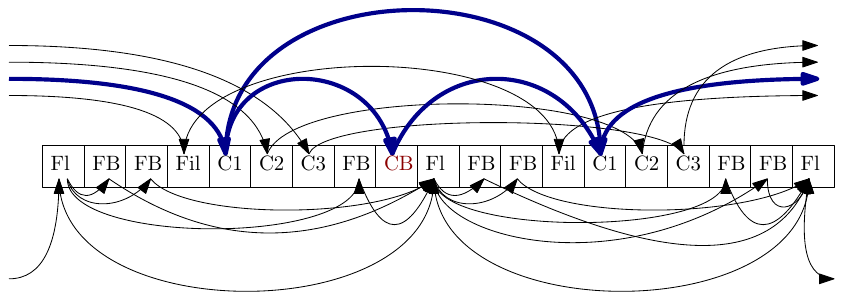}
            \caption{Part of the construction of $G$ for \textsc{Directed Bandwidth} for \emph{DAGs}. Fl = floor. Fil = filler path. C$i$ = Counter path $i$. FB = \emph{between} vertex of floor. CB =  \emph{between} vertex of counter gadget. The arcs of counter gadget 1 are shown in blue with fat lines. In the example, counter 1 is part of a check and has the value given in the check; thus counter path 1 has a \emph{between} vertex. 
            If another counter would also give the check value, there would be a second CB vertex which does not fit in between the two same consecutive floor vertices --- halting the machine corresponds to not being able to realize bandwidth $B$.}
            \label{fig:db-dag}
        \end{figure}

    \medskip
    
    To show XNLP-hardness, we reduce from \textsc{NNCCM Acceptance}. We first describe the reduction. Suppose we have an instance of \textsc{NNCCM Acceptance}, with $k$ counters that have values
    in $[0,n]$, and a series of $r$ checks. We will construct a directed acyclic graph $G$, and set a target value $B$, such that $G$ has directed bandwidth at most~$B$, if and only if the given NNCCM has at least one accepting run. 
    
    Set $B= k+6$, $S = (k+2)n$ and
    $N = (r+1)SB+1$. Let $L = (r+1)S$. The graph $G$ is constructed as follows.

    \paragraph{Floor.} We take a directed path with $L+1$ vertices, $v_{f,0}, \ldots, v_{f,L}$. We call this path the \emph{main floor path}. To this floor path we add \emph{floor between vertices}: for each pair of vertices $v_{f,i}, v_{f,i+1}$ we have from two to four vertices with an arc from $v_{f,i}$ and an arc to $v_{f,i+1}$:
        \begin{itemize}
            \item If $i$ is of the form $kni'$ for some $i'$, then we have three floor-between vertices between
            $v_{f,i}, v_{f,i+1}$. The third we call the \emph{extra floor-between} vertex.
            \item If $i$ is of the form $S\cdot i+1$ for some $i$, then we have four floor-between vertices between
            $v_{f,i}, v_{f,i+1}$. The third and fourth we call the \emph{extra floor-between} vertices.
            \item Otherwise, there are two floor-between vertices.
        \end{itemize}
    
    \paragraph{Counter gadgets.}
    For each of the $k$ counters, we have a \emph{counter gadget}. The counter gadget has a similar shape at the
    floor path: a long path with a number of between-vertices. The path is slightly longer  (precisely by $n$), and only a few pairs have a between-vertex.
    
    A counter gadget has a \emph{counter gadget path} with
    $L+n$ vertices, which we name $v_{c,i,j}$, with $i\in [1,k]$, and
    $j\in [1, L+n]$. There is an arc from $v_{f,0}$ to the first vertex of each counter gadget path, and an arc from the last vertex of each counter gadget path to
    $v_{f,L}$.
    
    If the $\alpha$th check of the NNCCM is the four-tuple $(i_1,q_1, i_2, q_2)$, then
    we add a between vertex that has an arc from $v_{c,i_1 \alpha\cdot S  + n - q_1}$ 
    and an arc to $v_{c,i_1, \alpha\cdot S  + n - q_1} +1$; and we also
    add a between vertex that has an arc from $v_{c,i_2, \alpha\cdot S  + n - q_1}$
    and an arc to $v_{c,i_2, \alpha\cdot S  + n - q_1} +1$.
    
    \paragraph{Filler path.}
    The last step in the construction of $G$ is the \emph{filler path}. We count the number
    of vertices in the floor and counter gadgets. Suppose this is $\alpha$.
    Let $Q= N - \alpha = L+1 - \alpha $.
    Then,
    take a directed path with $Q$ vertices. We take an arc
    from $v_{f,0}$ to the first vertex of of the filler path, and an arc from the last vertex of the filler path to
    $v_{f,L}$.
    
    This finishes the construction of $G$. 
    
    \paragraph{Correctness of the reduction.}
    We start with a few observations. First, notice that $G$ has one vertex with indegree $0$, namely $v_{f,0}$ and one vertex with outdegree zero, namely $v_{f,L}$.
    Second, notice that the total number of vertices equals $N$. 
    
    Also, note that the width of $G$ is $k+5$: a set $Q$ of vertices without a directed path between any pair of them can contain at most four vertices of the floor gadget, one vertex per counter gadget, and one vertex of the filler path.
    
    \begin{lemma}
        If the \textup{NNCCM} has an accepting run,
        then $G$ has a topological ordering of bandwidth at most $B$.
        \label{lemma:dirbandw1-1}
    \end{lemma}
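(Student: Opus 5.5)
We build an explicit topological ordering from an accepting run, batch by batch, and then verify the bandwidth bound. Fix an accepting run and let $p_{i,\alpha}$ be the value of counter~$i$ at the $\alpha$th check. These values are non-decreasing in~$\alpha$ and lie in~$[0,n]$. For each counter~$i$ we choose a non-decreasing function $g_i:[0,L]\to[0,n]$ with $g_i(0)=0$ and $g_i(L)=n$, and with $g_i(m+1)-g_i(m)\le 1$. We require $g_i$ to be constant, equal to $p_{i,\alpha}$, on a window of floor indices around $\alpha S$ containing the index where the check gadget sits. The value $g_i(m)$ is the offset of counter path~$i$ relative to the floor after batch~$m$.

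Between two checks the counters must climb from $p_{i,\alpha}$ to $p_{i,\alpha+1}$, and at the end to~$n$. There are $S=(k+2)n$ floor steps available, so we schedule the increments sparsely: at most one counter increments per floor step, and increments are placed only at steps carrying no extra floor-between vertex and no check. This is the purpose of the periodic extra floor-between vertices at indices $kni'$ and $Si+1$: they fix a constant number of occupied slots there, and increments are routed elsewhere.

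Batch~$m$ consists of the positions strictly between $v_{f,m}$ and $v_{f,m+1}$. It contains the following vertices:
\begin{itemize}
\item the floor-between vertices of that step;
\item counter-path vertices $v_{c,i,j}$ for $j\in(m+g_i(m-1),\,m+g_i(m)]$, so one vertex, or two when counter~$i$ increments;
\item any check between vertex whose endpoints both fall in this batch;
\item filler vertices filling all remaining slots, one filler vertex taken in path order per unused slot.
\end{itemize}
Within a batch the order is: the counter-path vertices of each counter in path order, the check between vertices immediately after their tail, and the floor-between vertices anywhere. Consecutive floor vertices are then exactly $B=k+6$ apart, so every arc $v_{f,m}v_{f,m+1}$ has stretch~$B$.

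The main step is counting slots. A batch has $B-1=k+5$ slots. We must show that
\[
\#\text{floor-between}+\sum_i\#\text{counter}+\#\text{check between}\le k+5 .
\]
We also need the filler path to contribute roughly one vertex per batch, so that the filler's successive vertices stay within distance~$B$. If the slot count is right, the total filler count~$Q$ matches $N$ minus the other vertices, and we distribute filler vertices one per leftover slot. Filler vertices that are consecutive on the path then lie in consecutive or the same batches, giving stretch at most~$B$.

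The crux is the check batch. Because the run accepts, at most one of the two checked counters has the checked value. Hence at most one check between vertex lands in the same batch as its path neighbours, giving $2+k+1\le k+5$ slots with no increments there. The check between vertex of the other counter lies in a different batch, because that counter's offset differs. We place it in its own batch, which has at most two floor-between vertices, leaving room.

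The arc checks are then routine. Any arc joining a batch to an adjacent floor vertex, or joining vertices in the same or adjacent batches in path order, has stretch at most~$B$. The delicate arcs are $v_{c,i,j}v_{c,i,j+1}$ that cross a floor vertex. The tail is placed last in its batch's counter block and the head first in the next, and consistent ordering of the counter blocks keeps the stretch $\le B$. This uses $g_i(m+1)-g_i(m)\le1$. The arcs from $v_{f,0}$ and into $v_{f,L}$ are fine because the first and last batches contain the first and last vertices of every path.

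I expect the main obstacle to be this slot bookkeeping, in particular reconciling the index formulas for check between vertices and extra floor-between vertices with where increments are placed. I would first fix precisely, as a function of $m$, how many non-filler vertices batch~$m$ holds, then verify the bound by cases: check batch, extra-floor batch, increment batch, and ordinary batch. Finally I would confirm that the filler count fits exactly.
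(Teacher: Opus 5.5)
You have the paper's overall strategy (floor vertices at multiples of $B$, counter offsets encoding counter values, counters frozen near each check, filler in leftover slots), but the crux step fails as written. Read the construction the way the paper's two proofs use it. The check batch $\alpha S$ carries three floor-between vertices and batch $\alpha S+1$ carries four; the ``$kni'$'' in the construction is a leftover for the check index. These extras do not exist to route increments. They make those two batches tight, which is what the converse lemma exploits ($3+4+2=9>8$). Your check-batch count should therefore read $3+k+1=k+4$, and that still fits. The real problem is the other checked counter. If its value differs from its check value by one in the appropriate direction, the tail of its check between vertex lies in batch $\alpha S+1$. Placing the between vertex there gives $4+k+1=k+5$ non-filler vertices, which fills the whole batch. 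So your claim that ``its own batch has at most two floor-between vertices, leaving room'' is false exactly there. The paper's fix is to put such a between vertex in one of the first two positions of the \emph{next} batch. It then lies after its tail and before its head, which sit in counter $i$'s slot in consecutive batches, at distances at most $B-1-i$ and $2+i$.

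The second gap is the bookkeeping that makes the stretch bound hold. The filler path runs from $v_{f,0}$ to $v_{f,L}$, so every batch must contain a filler vertex; otherwise some filler arc spans a floor vertex and a whole batch, with stretch at least $B+2$. Hence the budget for non-filler vertices is $k+4$ per batch, not $k+5$. Batch membership alone also does not bound the stretch. Two filler vertices in consecutive batches can be $2B-2$ apart. If floor-between vertices go ``anywhere'', or a check between vertex is inserted inside the counter block, a counter's vertex can sit later in batch $m+1$ than in batch $m$, giving stretch $>B$. The paper avoids this with a rigid layout:
\begin{itemize}
\item the first two and last two positions of each batch take all floor-between, check-between and increment vertices;
\item the third position is reserved for the filler;
\item position $3+i$ holds counter $i$'s regular vertex.
\end{itemize}
Every arc is then checked by arithmetic; for example, from one of the last two positions to position $3+i$ of the next batch the distance is at most $5+i<B$. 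Laying the filler path in order on the unused positions never jumps more than $B$, because the third position is free in every batch. Two smaller points. Your range $(m+g_i(m-1),m+g_i(m)]$ contains $0$ or $1$ indices rather than $1$ or $2$. You also still need to check that enough increment steps remain once counters are frozen around each check. This is tight, since up to $kn$ increments may be needed between consecutive checks, and the paper's variant NNCCM reserves exactly $kn$ increment steps in each period of length $S$.
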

    
    \begin{proof}
        First, map each vertex of the floor path: $f(v_{f,i})= B \cdot i$.
        
        Now, we first map the vertices of the counter paths. 
        
        We take a variant of the NNCCM model. Again, we start with $k$ counters that are initially $0$. We have $L$ time steps, where we repeat the following $r$ times: first $n$ time steps, nothing happens, then $kn$ time steps, we either do nothing or non-deterministically increase one of the counters by one (but not to a value larger than $n$), then we do nothing for $n-1$ steps, and finally we do the next of the $r$ checks. After the $r$th check, we first do nothing for $n$ steps, then each step, when there is a counter smaller than $n$, we increase one counter by one, till all counters have value $n$, after which we do nothing till we reach the $L$th step.
        
        It is easy to see that this gives an alternative, equivalent model of the NNCCM. So, we assume we have an accepting run of the NNCCM in this alternative model.
        
        We denote \emph{the $i$th batch} the vertices between $v_{f,i-1}$ and $v_{f,i}$ in the ordering that we are constructing. Note that each batch contains $B-1 = k+5$ vertices.
        These $k+5$ positions are distinguished as follows:
        \begin{itemize}
            \item The first two positions after $f(v_{f,i-1})$ are the \emph{early star} positions. These will be used for between vertices or vertices of the filler path, depending on a case analysis, which is explained below.
            \item The third position is reserved for the filler path.
            \item Then, we have $k$ positions, each reserved for one counter gadget path.
            \item The last two positions in the batch, the two positions before $f(v_{f,i})$ are called the \emph{late star} positions. Depending on a case analysis, these are used for
            between vertices, the filler path, or a vertex from a counter path --- the latter models the increase of a counter by one.
        \end{itemize}
        
        \begin{figure}
            \centering
            \includegraphics[width=0.5\linewidth]{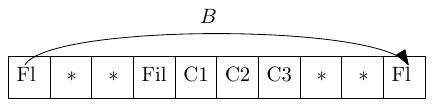}
            \caption{Illustration to the proof of Lemma~\ref{lemma:dirbandw1-1}, with a part of the layout from one floor path vertex to the next. Fl = position of floor path vertex. Fil = position reserved for the filler path. Ci = position reserved for the counter gadget path of the $i$th counter. * = position used by between vertices, counter path vertices and filler path vertices.}
            \label{fig:directed-one}
        \end{figure}
        
        We first position the vertices of the counter gadget paths. Consider the $i$th counter at
        time step $j$. If this counter is not increased and has value $\alpha$ at time step $j$,
        then we put the $(j+\alpha)$th vertex of the counter gadget path for the $i$th counter
        at position $f(v_{f,i-1}) + 3+i$. If this counter is increased from $\alpha$ to $\alpha+1$,
        then we again put the $(j+\alpha)$th vertex of the counter gadget path for the $i$th counter
        at position $f(v_{f,i-1}) + 3+i$, and we put the $(j+\alpha +1)$th vertex of the counter gadget path for the $i$th counter 
        at position $f(v_{f,i})-2$ (i.e., the first of the two late star vertices.)
        
        In the next step of the construction, we place the between vertices of the counter gadgets.
        Recall that such a vertex has an arc from a vertex of the form
        $v_{c,i_1, \alpha\cdot S  + n - q_1}$, with $c\in [1,k]$ denoting a counter, and
        $\alpha\in [1,r]$ denoting a check.
        There are two cases:
        \begin{itemize}
            \item $v_{c,i_1 \alpha\cdot S  + n - q_1}$ is placed in a the $j$th slot with $j$ of
            the form $\alpha \cdot S+1$. Then, we place this counter gadget between vertex in one of the two early star positions after $v_{f,j+1}$. (I.e., these go to the next batch.)
            \item Otherwise, the counter gadget between vertex is placed in one of the two late
            star positions of the same batch as $v_{c,i_1 \alpha\cdot S  + n - q_1}$.
        \end{itemize}
        
        As only two counters are involved in a check, at each batch, there are at most two counter gadget between vertex, so we can place all these counter gadget between vertices.
        
        The crux of the proof is that we have sufficient space to place the between vertices of the floor. There are the following cases:
        \begin{itemize}
            \item First we look at the batch of a time step of the form $\alpha \cdot S$, $\alpha\in [1,r]$. Here we use that we have an accepting run. If there is a counter gadget with a between vertex
            to be placed in this batch, then we can observe that this implies that this counter has
            the value that makes one half of the $\alpha$th check succeed. As we have an accepting run, there is only of the two counters that has the value that gives a counter gadget between vertex in this batch. So, there are three yet unused star positions, and we can place the three between vertices in these.
            \item Next, we consider the batch of a time step of the form $\alpha \cdot S+1$, $\alpha\in [1,r]$. Here, we do not place (by construction) counter gadget between vertices, and we do not increase a counter. So, we have space for all four floor between vertices.
            \item Next, we look at a batch of a time step in $[\alpha \cdot S -n+1, alpha \cdot S+1]$ that is not one of the cases above. Here we have two floor between vertices, at most two
            counter gadget between vertices, and we do not increase a counter. So, these between vertices can be placed in the four star positions.
            \item At each other time step, we have no counter gadget between vertices, two floor between vertices, and possibly one extra counter gadget path vertex (for a counter that is increased by one, cf. the construction above.) We have enough star positions for these (actually, one or two are unused yet, and will be filled, see below, by the filler path.)
        \end{itemize}
        
        At this point, all we need to do is to place the vertices of the filler path. Notice that the positions reserved for the filler path are not yet assigned, and each has distance $B$ to the next such vertex. So, we can place the filler vertices in sequence on the yet unused positions.
        
        This completes the construction of the topological ordering with bandwidth at most $B$. \qed
    \end{proof}
    
    \begin{lemma}
        If $G$ has a topological ordering of bandwidth at most $B$, then the \textup{NNCCM} has an accepting run.
        \label{lemma:dirbandw1-2}
    \end{lemma}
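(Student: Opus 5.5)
Fix a topological ordering $f$ of bandwidth at most $B$. The plan is to rigidify $f$, then read counter values off the counter paths, and finally argue that a failed check leaves a batch with too many vertices.

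\textbf{Rigidity.} Since $v_{f,0}$ is the unique source and $v_{f,L}$ the unique sink, we have $f(v_{f,0})=0$ and $f(v_{f,L})=N-1$. The floor path has $L$ arcs, each of length at most $B$, so $f(v_{f,L})-f(v_{f,0})\le BL$. Comparing with the vertex count, which the filler path makes exactly $BL+1$, every floor arc has length exactly $B$. Hence $f(v_{f,i})=Bi$ for all $i$, and each batch (the $B-1=k+5$ positions strictly between consecutive floor vertices) is completely filled.

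\textbf{Each batch meets every path.} Every path from $v_{f,0}$ to $v_{f,L}$ has arcs of length at most $B$. Such a path cannot jump over a floor vertex without leaving a vertex in the batch, because then some arc would span more than $B$ positions. So each counter path and the filler path has at least one vertex in every batch. Likewise, each floor-between vertex lies strictly between $v_{f,i}$ and $v_{f,i+1}$ and therefore sits in batch $i+1$.

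\textbf{Defining the run.} For counter $c$ and batch $j$, let $\delta_c(j)$ be the index of the last vertex of counter path $c$ in batch $j$, minus $j$. Consecutive path vertices lie in the same or in adjacent batches, so $\delta_c$ is non-decreasing. Counting gives $0\le\delta_c\le n$, since the path has $L+n$ vertices spread over $L$ batches with at least one per batch. Thus $\delta_c$ is a legal counter trajectory. I take the value of counter $c$ at check $\alpha$ to be $\delta_c$ at the batch $\alpha S$, adjusting the offset by the $n-q$ shift used in the construction.

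\textbf{Checks.} This is the main obstacle. Fix check $\alpha=(i_1,q_1,i_2,q_2)$ and suppose both counters hold the forbidden values. Then both counter-between vertices attached at index $\alpha S+n-q$ must be placed near batch $\alpha S$. An exact count should then overflow the $k+5$ slots of that batch, or of the adjacent batch $\alpha S+1$. The batch must contain:
\begin{itemize}
\item the three floor-between vertices, or four at $\alpha S+1$;
\item one vertex per counter path and one filler vertex;
\item the two counter-between vertices.
\end{itemize}
The delicate part is twofold. First, the counter-between vertices can float to a neighbouring batch, since each lies between two consecutive path vertices that may straddle a floor vertex. Second, counters may increase near check batches. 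To handle both, I plan a local counting argument over the window of batches $[\alpha S-n+1,\alpha S+1]$. Over this window the total number of slots is fixed, and the required vertices number at least the slots plus one. To rule out counter increases inside the window as an escape, I would use a global budget: there are at most $kn$ increases in total, and the run can be chosen along the long stretches of $S=(k+2)n$ batches where the extra floor-between vertices force no increases. Pinning down this count is the step I would work out most carefully.
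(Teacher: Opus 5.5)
Your rigidity step, the observation that every batch contains a vertex of each counter path and of the filler path, and the reading of counter values through the \emph{last} path vertex in a batch all agree with the paper. The gap is the check step, which is the whole content of the lemma. You leave it unproved, and the plan you sketch would not close it.

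Over the window $[\alpha S-n+1,\alpha S+1]$ there is no overflow. An ordinary batch holds only two floor-between vertices besides the $k+1$ forced path vertices, so it has two spare slots. The window has $(n+1)(k+5)$ positions. The forced vertices number at least $(n+1)(k+1)+2(n-1)+3+4+2$. That leaves slack $2n-3>0$ whenever $n\ge 2$.

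The ``global budget'' on increases is also not needed. An increase only puts an extra counter-path vertex into a batch, which uses up slots, so increases can never serve as an escape.

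Your first instinct, that batch $\alpha S$ or batch $\alpha S+1$ must overflow, is correct. What is missing is the localisation that justifies it:
\begin{itemize}
\item Suppose $\delta_{c_1}(\alpha S)=n_1$. Then the anchor $u=v_{c,c_1,\alpha S+n_1}$ is the last vertex of path $c_1$ in batch $\alpha S$.
\item So its path successor is not in batch $\alpha S$. Since arcs have length at most $B$ and $f(v_{f,i})=Bi$, that successor lies in batch $\alpha S+1$.
\item The counter-between vertex has an arc from $u$ and an arc to this successor. It therefore sits strictly between them in $f$, i.e., in batch $\alpha S$ or $\alpha S+1$. The same holds for $c_2$.
\end{itemize}
Now count only these two batches. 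They have $2(k+5)$ positions, and at least $2(k+1)$ of them hold counter-path and filler vertices. That leaves $8$ positions for the $3+4$ floor-between vertices and the $2$ counter-between vertices, which is a contradiction.

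One further point: take the value of counter $c$ at check $\alpha$ to be $\delta_c(\alpha S)$ itself, with the between vertex anchored at $v_{c,c_1,\alpha S+n_1}$ as in the paper's proof. Reading it as $n-\delta_c$ to match the ``$n-q$'' index in the construction would make the counters non-increasing, so it would not give an NNCCM run.
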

    
    \begin{proof}
        For each of notation, we assume that we represent the topological ordering with a bijective mapping $f: V \rightarrow [0,N-1]$.
            First, as the only vertex with indegree is $v_{f,0}$, so 
        $f(v_{f,0}) = 0$. Similarly, as $v_{f,(r+1)kn}$ is the only vertex of
        outdegree 0, $f(v_{f,(r+1)kn}) = N-1 = (r+1)knB$. 
        Now, observe that the number of arcs from $v_{f,0}$ to $v_{f,(r+1)kn}$ is $(r+1)kn$, while the difference in position in $f$ is exactly $B$ times as large --- this implies that for each $i$,
        $f(v_{f,i}) = B \cdot i$.
        
        Call the set of vertices between two successive vertices of the floor path (again, like in the proof of the previous lemma) a \emph{batch}. Each batch has exactly $B-1 = k+5$ vertices.
        
        As the filler path forms a path from $v_{f,0}$ to $v_{f,L}$,
        each $B$ successive positions must contain at least one vertex
        from the filler path. Similarly, each $B$ successive positions
        contains at least one vertex from each counter gadget path.
        This implies that each batch contains at most four between
        vertices.
        
        For each of the $L$ batches, we assign a value to each counter.
        We say that the $i$th counter has \emph{value}
        $\beta$ at time $j$,
        if $v_{c,i,j+\beta}$ is in the $j$th batch. In case there are
        more vertices of the $i$th counter gadget path in this batch, we choose the highest value.
        
        First, notice that at time step one, each counter has value at least $0$. Second, because each batch contains at least
        one vertex of each counter gadget path, when we go to a next batch, we have at least the next vertex of the counter gadget path, so the value of a counter does not decrease with time.
        Third, as the last vertex of a counter gadget path is of the form
        $v_{c,i,L+n}$ and is in the $L$th batch, we have that at time $L$ the value of each counter equals $n$. So, counters have
        values between $0$ and $n$, do not decrease over time.
        
        Thus, these values of the counters can be used to describe
        the non-deterministic increasing of an NNCCM. It remains to
        verify that this behaviour does not let the machine halt. So, we end the proof by showing that
        at each time step of the form $\alpha \cdot S$, the values of
        the counters do not halt the machine.
        
        Suppose the $\alpha$th check is the four tuple $(c_1,n_1,c_2,n_2)$. Suppose this check halts the machine.
        Then counter $c_1$ equals $n_1$ at time
        $\alpha \cdot S$, hence 
        $v_{c,c_1,\alpha \cdot S + n_1}$ is in batch $\alpha \cdot S$.
        This vertex has an arc to a counter gadget between vertex.
        This between vertex must be in batch $\alpha \cdot S$ or in
        batch $\alpha \cdot S+1$. A similar argument shows that
        a counter gadget between vertex of the gadget for counter $c_2$ must be in batch $\alpha \cdot S$ or in
        batch $\alpha \cdot S+1$. 
        
        Now, we have in batch $\alpha \cdot S$ and
        batch $\alpha \cdot S+1$ in total at least nine between vertices: three floor between vertices for $\alpha \cdot S$,
        four floor between vertices for $\alpha \cdot S+1$ and
        two counter gadget between vertices. But, we have only space for at most eight between vertices.
        (The two slots, not counting floor path vertices, must contain each one filler vertex
        and $k$ counter gadget path vertices, which leaves space for $2(k+5)-2k-2 = 8$ between vertices.)
        This contradiction shows that the test cannot cause the machine to halt, and thus we have
        an accepting run of the NNCCM. \qed
    \end{proof}
        
    The result now follows from Lemma~\ref{lemma:dirbandw1-1} and Lemma~\ref{lemma:dirbandw1-2}, and the observations that $G$ has bounded width and can be constructed in logarithmic space.

\subsection{XNLP-hardness for Trees}
\label{subsection:dbtrees}

    In this subsection, we show the following result.
    
    \begin{theorem}
        \textsc{Directed Bandwidth} is \textup{XNLP}-complete, even for inputs that are either a downward directed tree or an upward directed tree, with as parameter target value.
            \label{theorem:db-tree}
    \end{theorem}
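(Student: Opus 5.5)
Membership follows exactly as for Theorem~\ref{theorem:dbdag}: with target value $k$ we guess the ordering vertex by vertex and keep only the last $k$ placed vertices plus a counter in memory, which takes $O(k\log n)$ space. It also suffices to treat downward directed trees; reversing all arcs gives the upward case, as noted in the preliminaries. For hardness I will reduce from \textsc{NNCCM Acceptance} by adapting the DAG construction of Section~\ref{subsection:dbdags}. The width parameter is dropped, and instead every gadget must hang off a single tree.

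The plan is to replace the rigid \emph{floor path} by a \emph{spine} path $v_{f,0},\ldots,v_{f,L}$ rooted at $v_{f,0}$. To each spine vertex I attach many pendant leaves (caterpillar style). If a vertex $v$ has $d$ children, then all of them lie within $B$ positions after $v$, so forcing the spine vertex $v_{f,i}$ to have $B-1$ children fixes the positions almost exactly. This mimics the counting argument that gave $f(v_{f,i})=B\cdot i$. The choice of $B$ and the leaf counts are set so that the spine advances by exactly $B$ per step and each batch is completely filled, as in the DAG proof.

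The counter gadget paths cannot join back at $v_{f,L}$ in a tree. Instead, counter path $i$ starts as a child of $v_{f,0}$ (or of an early spine vertex) and runs in parallel. Its tail is forced to be ``pulled along'' because, near the end, its vertices carry leaf-bundles whose positions conflict with the spine's fully packed batches unless the path has advanced the right amount. The filler path's role of exactly occupying the remaining slots is taken over by the pendant leaves on the spine. The between vertices of the DAG proof, which have in- and out-arcs, become \emph{extra} leaves: for the $\alpha$th check $(i_1,q_1,i_2,q_2)$ I add one leaf child to $v_{c,i_1,\alpha S+n-q_1}$ and one to $v_{c,i_2,\alpha S+n-q_2}$. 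Each such leaf must lie within $B$ after its parent, so it lands in the same or the next batch. Extra spine leaves at check times $\alpha S$ and $\alpha S+1$ then reproduce the capacity count of Lemma~\ref{lemma:dirbandw1-2}: with 9 required items and 8 slots, a rejecting check gives a contradiction. In that count, each spine leaf acts as a floor-between vertex. The completeness direction follows Lemma~\ref{lemma:dirbandw1-1} with the same reserved-slot layout.

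The main obstacle is that in a downward tree there is no upper anchor. Nothing forces a counter path to stay near the spine or to end with value $n$, so a counter path could lag behind or drift arbitrarily. I would first try to enforce synchronization from below. Each counter path vertex gets one private leaf, and the spine leaf counts are tuned so that every batch has exactly $k$ free slots. Any counter path that falls behind then leaves a slot that nothing else can fill, because the remaining vertices of that path are too far ahead or behind. The monotonicity and the bound ``value $\le n$'' then come from path length $L+n$ and the packing count. Proving this tightness rigorously, in particular that leaves cannot migrate between batches to compensate, is where I expect most of the work. If it fails, the fallback is to let counter paths hang from spine vertices at regular intervals: restart them every $S$ steps with a re-synchronizing gadget, and transfer the counter value through bounded offsets.
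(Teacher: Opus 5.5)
Your membership argument and the pieces you carry over from the DAG gadget (between vertices turned into leaves of counter-path vertices, floor-between vertices as pendant leaves of the floor path, reusing the capacity count at a rejecting check) agree with the paper. The hardness proof, however, has a genuine gap exactly at the obstacle you flag, and your remedies do not close it.

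Pendant leaves cannot create the batch structure. A leaf of $v_{f,i}$ only has to lie in $(f(v_{f,i}), f(v_{f,i})+B]$, which gives no lower bound on $f(v_{f,i+1})-f(v_{f,i})$. In the DAG proof, $f(v_{f,i})=B\cdot i$ came from pinning both ends of the floor path ($v_{f,0}$ the unique source, $v_{f,L}$ the unique sink) together with the exact vertex count. In a downward tree only the root is pinned. So the floor path may contract, and counter paths may lag and simply continue after $v_{f,L}$. Then neither ``each batch contains a vertex of every counter path'' nor ``every counter ends with value $n$'' follows, yet both are used in Lemma~\ref{lemma:dirbandw1-2}. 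Private leaves on counter vertices do not change this, because nothing forbids vertices placed beyond the end of the floor path.

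Dropping the filler path also breaks the other direction. Once the spacing is exactly $B$, the leaves of $v_{f,i}$ are confined to the batch right after $v_{f,i}$, so the room left in each batch is fixed by the reduction. The number of counter-path vertices per batch, by contrast, depends on when the run increments its counters. The filler path is precisely the absorber whose per-batch consumption can vary.

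The missing idea is the paper's \emph{tails}. The floor path is extended by a tail path whose last vertex $t_{f,3k+3}$ has $B$ pendant leaves. These fill all $B$ positions after it, and any later vertex would need its parent among them, so $t_{f,3k+3}$ and its leaves are the last $B+1$ vertices of the ordering (an impassable barrier). With the exact vertex count this pins $f(t_{f,3k+3})$ and forces every arc of the floor path and its tail to have length exactly $B$, which recovers the batches.

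The filler path is kept, and it and each counter path receive a tail ending in leaf bundles of graded sizes:
\begin{itemize}
\item $B-2$ leaves on each of the last two filler-tail vertices;
\item $B-2-i$ leaves on each of the last three vertices of the $i$th counter tail, which has $3k+4-3i$ vertices.
\end{itemize}
Counting vertices in windows of $B$ or $3B$ positions forces these bundles into prescribed final tail batches in a fixed order. Hence the first vertex of every tail is in the first tail batch, and the last vertex of every counter path and of the filler path lies in batch $L$. This restores exactly what the DAG argument needs. The rest of Lemma~\ref{lemma:dirbandw1-2} then goes through, since it never used the out-arcs of between vertices.
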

    
    Membership follows the same way as the XNLP-membership for DAGs. We now show XNLP-hardness for downward directed trees, and the result for upward directed trees follows directly (just reverse the direction of all arcs).

\begin{oframed}
    \textbf{Intuition.} The graph created in the proof of Theorem~\ref{theorem:dbdag} can be turned into an downward directed tree by removing the following arcs: for each \emph{between} vertex, its outgoing arc, and for each counter gadget path and the filler path, the arc to~$v_{f,L}$. 
    The outgoing arcs of \emph{between} vertices are actually not needed to make the reduction work,
    but keep the width of directed acyclic graph small --- without these, the width would no
    longer be bounded by a function of $k$. As the width of the
    tree is not a parameter in Theorem~\ref{theorem:db-tree}, we can thus afford to remove these arcs. 
    
    The other type of arcs however disables the proof that a topological ordering of bandwidth at most $B$ implies an 
    accepting run of the NNCCM --- in particular, we cannot guarantee that the last vertex of
    the floor path $v_{f,L}$ is the last vertex of the ordering, and we cannot guarantee that the last vertex of each counter gadget path and the filler path is in the last batch, i.e.,
    between $v_{f,L-1}$ and $v_{f,L}$ in the ordering. So, to ensure these, what we do below is to add
    new gadgets: for each of the floor, filler path, and counter gadgets, we add \emph{tails} --- a new subtree whose root has as parent the last vertex
    of the floor path, filler path, or a counter gadget path. By carefully choosing the number of leaves for each these tails, we can enforce that they are all at the end of the ordering and in a specific order --- the more leaves they have, the closer they are to the last position.
\end{oframed}
    
    \paragraph{The construction.}
    We build a graph $G''$ as follows. First, take the graph $G$, as constructed in the
    proof of Theorem~\ref{theorem:dbdag}, but omit the outgoing arcs from between vertices,
    and the arcs to $v_{f,L}$ from the last vertices of the filler path and the counter gadget paths. Call this downward directed tree $G'$. 
    
    Then, we do the following. Each of the floor path, filler path and counter gadget paths is made longer, and we add a number of pendant leaves to the last or last two vertices of the path. We call the new parts of these paths the \emph{tail paths} and the new pendant vertices the \emph{tail leaves}.
    
    \begin{itemize}
        \item Add $4$ new vertices that have $v_{f,L}$, the last vertex of the floor path as parent.
        \item The \emph{floor tail path} has $3k+3$ vertices, $t_{f,1}, \ldots, t_{f,3k+3}$; there is an arc from the last vertex of the floor path $v_{f,L}$ to the first vertex
        of the floor tail path~$t_{f,1}$. The last vertex of the floor tail path $t_{f,2k+2}$ gets $B$ tail leaves.
        \item The \emph{filler tail path} has $3k+3$ vertices, $t_{fi,1}, \ldots, t_{fi,3k+3}$; there is an arc from the last vertex of the filler path to the first vertex
        of the filler tail path. Both the one-but-last and the last vertex of the floor tail path get $B-2$ tail leaves.
        \item For each $i\in [1,k]$, we have the \emph{counter gadget tail path} for the $i$th counter. It has $3k+4-3i$ vertices; call these $t_{c,i,1}, \ldots, t_{c,i,3k+4-3i}$; there is an arc from the last vertex of the $i$th counter gadget path to the first vertex
        of the $i$th counter gadget tail path. To the last three vertices of each counter gadget tail path, i.e., to $t_{c,i,3k+2-3i}, t_{c,i,3k+3-3i}, t_{c,i,3k+4-3i}$ we add
        $B-2-i$ tail leaves.
    \end{itemize}

A simplified example is given in Figure~\ref{figure:tail}.

\begin{figure}
    \centering
    \includegraphics[width=0.6\linewidth]{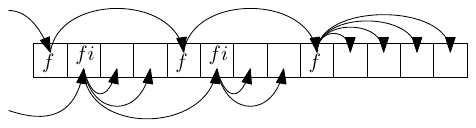}
    \caption{The last part of the tails. In the figure, the last three vertices of the floor tail path ($f$ --- arcs at upper side of image), the last two vertices of the filler tail path ($fi$~---~arcs at lower side of image), and their tail leaves are shown, with $B=4$. }
    \label{figure:tail}
\end{figure}
    
    This finishes the construction of $G''$. It is clear that $G''$ is a downward directed tree, and can be constructed in logspace.
    
    \begin{lemma}
        If the \textup{NNCCM} has an accepting run, the $G''$ has a topological ordering of directed bandwidth at most $B$.
    \end{lemma}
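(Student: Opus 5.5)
Starting from an accepting run, I will build the ordering in two parts: an initial segment inherited from Lemma~\ref{lemma:dirbandw1-1}, followed by an explicit layout of the tails.

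\textbf{Initial segment.} Lemma~\ref{lemma:dirbandw1-1} gives an ordering $f$ of $G$ of bandwidth at most $B$. Every vertex of $G$ is also a vertex of $G''$, and $G'$ is obtained from $G$ by deleting arcs. Hence the restriction of $f$ is a topological ordering of $G'$ with bandwidth at most $B$. I will keep $f$ on all vertices of $G$; it places $v_{f,L}$ at position $BL$.

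The ordering of $G$ puts $v_{f,L}$ last, after the last vertices of the filler and counter paths. Those vertices now feed their tail paths, so their children will sit after position $BL$. The constraint to respect is that each such vertex, lying in the last batch, has its first tail vertex within distance $B$.

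\textbf{Tail region.} After position $BL$ I will lay out the remaining vertices period by period, mimicking the batch structure once more. In each period of length $B$ I place one vertex of each tail path that is still active. These are the floor tail path, the filler tail path, and the $i$th counter tail path as long as it has vertices left. Each period also holds the tail leaves whose parent was placed in the previous period or earlier, still within distance $B$. The four leaves attached to $v_{f,L}$ go into the first period right after $BL$.

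The lengths are chosen so that the tails finish in order of decreasing $i$. The tail of counter $i$ has $3k+4-3i$ vertices, so counter $k$ finishes first and counter $1$ last, followed by the filler and the floor. The leaf counts $B-2-i$ on the last three vertices of each counter tail are meant to take up exactly the positions freed as the other tails terminate. Each period should then remain completely full, and every leaf should land within $B$ of its parent.

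Concretely, I plan to fix the order inside each period as floor tail vertex, filler tail vertex, then counter tail vertices in increasing $i$, then leaves. I will then verify by a short count that the number of vertices available per period equals $B$ at every step. The last period ends with the $B$ leaves of the floor tail's last vertex.

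\textbf{Main obstacle.} The hardest step is this arithmetic bookkeeping in the tail region: showing that every leaf is within distance $B$ of its parent and that no gap or overflow arises. The construction text has some inconsistencies, for instance $t_{f,2k+2}$ versus $t_{f,3k+3}$, and ``floor tail'' appearing where ``filler tail'' is meant. I would first fix the intended lengths and then check the three phases separately. These are: (a) the transition period right after $v_{f,L}$, with its four extra leaves; (b) the periods where counter tails end, each end freeing three consecutive periods' worth of leaf slots, which matches the three leaf-bearing vertices per counter tail; and (c) the final filler and floor endings.

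If the exact counts do not balance, the fallback is to observe that gaps are harmless for an upper bound on bandwidth, as long as no arc is stretched beyond $B$. Leaves can always be placed immediately after their parent, at distance at most the number of siblings plus the number of tail path vertices interleaved in between. That is at most $B$ by the choice of leaf counts $B-2-i \le B-3$ and at most $k+2$ concurrent paths. A bijection onto a consecutive range is then recovered simply by compressing the ordering, which only shortens distances.
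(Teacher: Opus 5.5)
Your strategy matches the paper's: keep the ordering from Lemma~\ref{lemma:dirbandw1-1} (still valid for $G'$, which only loses arcs), continue the batch rhythm with the floor tail, and fill each tail batch with one vertex of the filler tail, one vertex of each active counter tail, and leaves. You also read the typos correctly ($t_{f,3k+3}$, and the two sets of $B-2$ leaves belong to the filler tail). The gap is that the tail bookkeeping, the only new content of this lemma, is left undone, and the concrete rules you state would break it, because there is no slack.

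The count that works is as follows. Put $t_{f,p}$ at $BL+pB$, so tail batch $p$ is $[BL+(p-1)B+1,\,BL+pB-1]$, and let $a_p=|\{i\in[1,k] : 3k+4-3i\ge p\}|$. Placing $t_{fi,p}$ and then $t_{c,1,p},\dots,t_{c,a_p,p}$ leaves exactly $B-2-a_p$ slots.
\begin{itemize}
\item For $p\in\{3k+2-3i,3k+3-3i,3k+4-3i\}$ one has $a_p=i$, and the last placed vertex $t_{c,i,p}$ carries exactly $B-2-i$ leaves.
\item For $p=1$, the $B-2-k=4$ slots take the leaves of $v_{f,L}$.
\item For $p\in\{3k+2,3k+3\}$, $a_p=0$ and $t_{fi,p}$ carries $B-2$ leaves.
\end{itemize}
So every batch is filled exactly by the leaves of the one leaf-bearing vertex of that batch, placed right after it, and all tail-path arcs have length exactly $B$. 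Each $t_{c,i,1}$ is at distance exactly $B-2$ from the last vertex of counter path $i$ (offset $3+i$ of batch $L$). $t_{fi,1}$ is at distance at most $B-2$ from the last filler vertex (offset at least $3$). The $B$ floor leaves occupy $BL+(3k+3)B+1,\dots,BL+(3k+4)B$, which accounts for all $(3k+4)B$ tail vertices.

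Measured against this, two of your rules fail.
\begin{itemize}
\item With your in-period order (path vertices first, then leaves), suppose a leaf of $t_{c,i,p}$ is put into the next period. It lies after at least $a_{p+1}\ge i-1$ counter vertices there, so it is at distance at least $B$ from its parent. All but at most one of the $B-2-i\ge 4$ leaves of $t_{c,i,p}$ end up farther than $B$. The same happens for filler-tail leaves. So these leaves must stay in their parent's batch, contrary to your phrase ``parent placed in the previous period or earlier''.
\item If the periods start at $BL+1$ and open with their floor vertex, then $t_{f,3k+3}$ sits at $BL+(3k+2)B+1$. Its $B$ final leaves are then up to $2B-1$ away. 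The floor vertex has to close each batch.
\end{itemize}
Finally, the fallback is unsound. Siblings plus interleaved path vertices give $(B-3)+(k+2)=B+k-1>B$. The real danger is overflow, not gaps: surplus leaves inside a batch stretch the tail-path arcs, including the floor tail, beyond $B$, and compression cannot undo that. Once the exact count above is carried out, no fallback is needed.
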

    
    \begin{proof}
        First, take the ordering of the vertices in $G'$ (i.e., in $G$) as in the proof of the first lemma in the proof of Theorem~\ref{theorem:dbdag}.
    
        First, we place the vertices of the floor tail path. We put consecutive ones at distance~$B$, like in
        the proof of Theorem~\ref{theorem:dbdag}: the $j$th vertex of the floor tail path is
        placed at position $f(t_{t,i}) = f(v_{f,L})+ B \cdot i$. Like before, this creates batches; we have
        tail batches $1, 2, \ldots, 3k+3$, with tail batch 1 formed by the positions between
        $v_{f,L}$ and $t_{f,1}$; for $i\in [2,3k+3]$, tail batch $i$ is formed by the positions
        between $t_{f,i-1}$ and $t_{f,i}$.
        
        Now, in tail batch $i$, we place successively: the $i$th vertex of the filler tail path,
        and for all $j \in [1,k]$, if the $j$th counter gadget tail path has length at least $i$,
        the $i$th vertex of the $j$-th counter gadget tail path. Then, in batch $1$, we place the
        four new leaves with $v_{f,L}$ as parent. 
        Then, this is followed by the tail leaves of the last placed filler or tail vertex. In each case, we precisely placed $B-1$ vertices in the batch.
        
        We finish the construction by placing the tail leaves of the floor at positions
        $f(t_{f,3k+3})+1$ till $f(t_{f,3k+3})+B$.
            
        A simple case analysis shows that this gives a topological ordering of directed bandwidth $B$.
        (For instance, the first vertices of filler tail and counter gadget tail paths have
        distance $B-2$ to the last vertex of their predecessors on the filler and counter gadget paths, cf. the construction in the proof of Theorem~\ref{theorem:dbdag}.) \qed
    \end{proof}
    
    \begin{lemma}
        If $G''$ has a topological ordering of directed bandwidth at most $B$, then the \textup{NNCCM} has an accepting run.
    \end{lemma}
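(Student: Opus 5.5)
The plan is to reduce to the argument of Lemma~\ref{lemma:dirbandw1-2}. Concretely, I will show that in any topological ordering $f$ of $G''$ with bandwidth at most $B$, the restriction to the vertices of $G'$ behaves as in the DAG case. That means: $v_{f,0}$ is first, floor vertices satisfy $f(v_{f,i})=B\cdot i$ for $i\in[0,L]$, and the last vertex of every counter gadget path and of the filler path lies in batch $L$ (before $v_{f,L}$). Once this holds, the counting argument of Lemma~\ref{lemma:dirbandw1-2} goes through verbatim. Counter values are defined from the batches, they are monotone and bounded, and they reach $n$ at time $L$. The outgoing arcs of between vertices were never used in that proof, except that a counter gadget between vertex lies within distance $B$ after its parent, so it sits in batch $\alpha S$ or $\alpha S+1$. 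Capacity is then at most eight between vertices in two consecutive batches, while nine would be needed, so no check halts the machine.

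First, $v_{f,0}$ is the root of the downward tree, hence at position $0$. Next I control the end of the ordering using the tails. A vertex with $d$ children must have all of them within the next $B$ positions, and a path vertex forces its successor within distance $B$. The key counting step uses the total number of vertices. $N''$ equals $N$ plus the $4$ leaves, the tail paths and the tail leaves, and the ordering occupies exactly $[0,N''-1]$. I will argue that the floor tail together with its $B$ leaves at $t_{f,3k+3}$ must occupy the very last positions. Suppose some other vertex came after all floor-tail leaves. It would have to be a leaf of some other tail whose parent is within distance $B$. I would show that the numbers of leaves $B$, $B-2$, $B-2-i$ (decreasing in $i$), together with path lengths $3k+3$ and $3k+4-3i$, make the tails nest in a forced order, ending with the floor tail. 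The idea is that each tail end vertex with many leaves "consumes" almost a full window of $B$ positions, and two such ends cannot overlap. Hence $f(t_{f,3k+3})=N''-1-B$.

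Then I walk backward. The floor path $v_{f,0},\dots,v_{f,L},t_{f,1},\dots,t_{f,3k+3}$ has $L+3k+3$ arcs spanning a distance of at most $B(L+3k+3)$. Comparing this with $N''-1-B$, which the construction is designed to make equal, forces every floor arc to have length exactly $B$. So all floor and tail batches have size $B-1$ and are completely full.

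The filler tail and counter tails must then fit into the tail batches. Every batch contains at least one vertex of each path that crosses it, and the tail batches are full with precisely the vertices described in the completeness lemma. I will use this to argue that no vertex of $G'$ other than those from batch $L$ can spill past $v_{f,L}$. In particular, each counter tail needs one vertex per tail batch up to its length, and so does the filler tail, and these tail leaves exhaust the room. Hence the first tail vertex of each counter and filler path lies in tail batch $1$. Its parent, the last vertex of the counter or filler path, is therefore within distance $B$ and before $v_{f,L}$, i.e., in batch $L$ (using the four extra leaves of $v_{f,L}$ to fill tail batch 1 exactly). A similar count shows that the filler and counter paths have a vertex in every batch, as required by the original argument.

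The main obstacle is the exact tail counting: proving rigorously that the specific leaf counts force the unique end structure, without gaps or alternative interleavings. I would attack it by an exchange and counting argument from the last position backward, batch by batch, using that each batch has exactly $B-1$ free slots and that the set of vertices forced into each tail batch already has size $B-1$.
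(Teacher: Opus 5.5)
Your outline follows the paper's architecture: the floor tail goes last, $f(v_{f,i})=B\cdot i$ is forced, the tails are pinned to the tail batches so that the last vertex of every counter and filler path lies in batch $L$, and then Lemma~\ref{lemma:dirbandw1-2} is rerun, which indeed never uses the out-arcs of between vertices. The gap is in the third step, which carries essentially all the difficulty of the tail gadget.

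Your argument that each $t_{c,i,1}$ and $t_{fi,1}$ lies in tail batch~$1$ rests on two assertions: ``the tail batches are full with precisely the vertices described in the completeness lemma'' and ``each counter tail needs one vertex per tail batch up to its length''. Both presuppose that every tail starts right after $v_{f,L}$ and ends in its intended batch, which is what must be proved. The count you do have is this: the filler and counter tails with their leaves, plus the four leaves of $v_{f,L}$, number exactly $(3k+3)(B-1)$, the number of non-floor tail positions. That only shows that a tail vertex before $v_{f,L}$ and a $G'$ vertex after $v_{f,L}$ occur together. It does not exclude a counter path ending early (its counter value then exceeds $n$, so it could dodge checks) while the filler or another counter path spills past $v_{f,L}$. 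You flag this yourself as ``the main obstacle'', but the proposed ``exchange and counting argument'' is never specified.

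The missing idea is the saturation effect of the specific leaf counts. A vertex's children must lie among the $B-1$ non-floor positions right after it. Any path with vertices on both sides of that window must also place a vertex in it.
\begin{itemize}
\item The vertex $t_{fi,3k+2}$ has $B-1$ children ($B-2$ leaves plus $t_{fi,3k+3}$), so its window is saturated. Nothing other than its descendants and floor vertices can come after it, which pins the filler end to the last two tail batches.
\item Each of $t_{c,i,3k+2-3i}$ and $t_{c,i,3k+3-3i}$ has $B-1-i$ children, leaving exactly $i$ free slots. That is room only for the filler and counters $1,\dots,i-1$.
\end{itemize}
This is what the paper exploits in its Claim~1 and Claim~2. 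Claim~1 shows every counter end block lies after $v_{f,L}$. Otherwise a window of about $3B$ positions starting at $t_{c,i,3k+2-3i}$ would have to hold the end block with its $3(B-2-i)$ leaves, together with forced floor, filler, earlier-counter and floor-between vertices, which do not fit. Claim~2 shows the end blocks are nested in the order $k,\dots,1$, then filler, then floor. Only once the end blocks are located does your exact count force one vertex of each path per tail batch. That puts all first tail vertices and the four leaves of $v_{f,L}$ in tail batch~$1$.

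Separately, placing the floor tail last needs no nesting argument. The $B$ leaves of $t_{f,3k+3}$ occupy all $B$ positions after it, so any later vertex would need its parent among those leaves, which is impossible.
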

    
    \begin{proof}
        Let $f$ be a topological ordering of $G''$
        of directed bandwidth at most $B$.
        
        First, notice as $t_{f,3k+3}$ has $B$ arcs to pendant leaves, the topological ordering must
        end with $t_{f,3k+3}$ and then these leaves --- this construction forms an `impassable barrier'. As the total number of vertices in all tails (including the four new successors of $v_{f,L}$) equals $(3k+4)B$, it follows
        that $f(t_{f,3k+3}) = N + (3k+3)B$. From this, we have that for all $i \in [1,3k+3]$,
        $f(t_{f,i}) = N + i\cdot B$. So, in this way, we created $3k+3$ floor batches, compare with the
        proof of Theorem~\ref{theorem:dbdag}; each batch the vertices between two successive
        vertices of the floor tail path, with the first floor batch between $v_{t,L}$ and
        $t_{f,1}$.
        
        Now, we claim that the last two batches must be filled by the vertices of the filler tail.
        Suppose we place a vertex $x$ of a counter gadget tail in floor batch~$3k+3$. Then, as there is
        a path from $v_{t,0}$ to $x$, each $B$ consecutive positions in the ordering contain at least one floor path or floor tail path vertex, and one vertex of this counter gadget path or its tail. But, the one-but-last vertex of the filler tail path  $t_{fi,3k+2}$ has $B-1$ successors, and thus one of these would be placed at a distance more than $B$ apart, contradiction.
        It follows that $t_{fi,3k+2}$ is placed directly after $t_{f,3k+1}$ and 
         $t_{fi,3k+3}$ is placed directly after $t_{f,3k+2}$, with the remaining positions in floor
         batches $3k+1$ and $3k+2$ used by the tail leaves of the filler path.
        
        \begin{claim}
            For each $i\in [1,k]$, $f(t_{c,i,3k+2-3i}) > f(v_{f,L})$.
        \end{claim}
        
        \begin{proof}
            Note that in other words, the claim states that for each counter gadget tail path, the second-but-last vertex is in the ordering after the last vertex of the floor path.
        
            Suppose not. Take the smallest value of $i$ for which the claim does not hold.
            Consider the $3B$ positions starting with $f(t_{c,i,3k+2-3i})$. These contain at least three
            vertices of the floor or floor tail; at least three vertices of the filler path or filler tail, and for each $j<i$, at least 
            three vertices of the $j$th counter gadget or counter gadget tail. (The latter follows as $f(t_{c,j,3k+2-3j}) > f(v_{f,L})$, hence
            every $B$ positions before $f(t_{c,j,3k+2-3j})$ must contain at least one vertex of the $j$th counter gadget or counter gadget tail.)
            Then, also, case analysis shows that this set contains at least four between vertices or the four successors of $v_{f,L}$. 
            In addition, all leaf vertices of the $j$th counter gadget tail
            are in this set, and the vertices $f(t_{c,i,3k+2-3i}). f(t_{c,i,3k+3-3i}), f(t_{c,i,3k+4-3i})$.
            This gives more than $3B$ vertices in $3B$ positions, contradiction. \qed
        \end{proof}
        
        \begin{claim}
            For all $1\leq i<i'\in [1,k]$, the last three vertices of the $i'$th counter gadget tail path are before each of the last three vertices of the tail leaf path of counter gadget $i$.
        \end{claim}
        
        \begin{proof}
            We show with induction to $i$ that this holds for all $i''\leq i$. The argument follows from counting vertices, and using that  each $B$ positions before the last vertex of a path, contain at least one vertex from a path. \qed
        \end{proof}
        
        Now, the only way that we can fit all the tails is that for each $i$, we place
        $t_{c,i,k+2-3i}$ in floor slot $t_{c,i,k+2-3i}$,
        $t_{c,i,k+3-3i}$ in floor slot $t_{c,i,k+3-3i}$, and
        $t_{c,i,k+4-3i}$ in floor slot $t_{c,i,k+4-3i}$. From this, it follows that tail slot 1
        contains all vertices~$t_{c,i,1}$.
        
        Now, we can show the lemma. Let $G'''$ be the graph $G$, obtained from removing all outgoing arcs that leave between vertices.
        Consider the restriction $f'$ of $f$ to the vertices in $G'$. 
        Suppose $x$ is either the last vertex of a counter gadget path or the last vertex
        of the filler path. $x$ is in $G''$ has an arc to the first vertex of the corresponding tail path --- that vertex is in the first tail slot, so $x$ is in slot~$L$. Hence,
        we have $f'(x) \leq f'(v_{f,L}) < f'(x)+B$. So, $f'$ is a topological ordering of~$G'''$
        of bandwidth at most $B$.
        
        Now, we can follow the proof of Theorem~\ref{theorem:dbdag} and show that the NNCCM has an
        accepting run, as that proof does not use the outgoing arcs from between vertices. (The only role of these arcs is to keep the width of the DAG small.) \qed
    \end{proof}
    
    From the two lemmas above, and the fact that $G''$ can be constructed in logspace, Theorem~\ref{theorem:db-tree}
    follows.
\section{Discussion}
    
    We end the paper with a few open problems.
    
    First, regarding scheduling with minimum delays, current reduction techniques from \textsc{Shuffle Product} to scheduling problems require jobs of either non-unit processing time or non-unit size. As such, it is still open whether $1|prec(\ell^{min}), \ell_{i,j} = \ell, p_j=1|C_{max} \le D$ is NP-complete for constant $\ell \ge 3$. In fact, following the relation with the parallel identical machine setting discussed in Section~\ref{subsection:sp_consequences}, this mirrors the longstanding open question about whether \SchedulingBaseProblem{} is NP-complete for a constant number of machines~$m \ge 3$~\cite{Ullman:75:NP-complete}.
    
    Second, regarding scheduling with maximum delays, settling the case of directed trees parameterized by width would complement both results about \textsc{Directed Bandwidth} in this paper. In this context, the parameter bounds both the number of leaves and the number of branching internal nodes in the tree. We believe that a fixed-parameter algorithm could stem from enumerating all sequences of these nodes which are compatible with the bandwidth constraints then, for each one of these sequences, solve an integer linear program featuring variables for the positions of these nodes in the topological order. This would imply fixed-parameter tractability, as long as the number of variables in each integer linear program is bounded by a function of the width~\cite{Lenstra:83:Integer}.
    
    Additionally, in 1978 Garey et al.~\cite{GareyGJK78} showed that \textsc{Directed Bandwidth} is NP-complete for directed trees with each vertex indegree at most two. In this paper, we only showed XNLP-completeness for general directed trees; we conjecture that this result could be strengthened to directed trees with indegree~two, but expect that a proof of such a result would be very technical.
    In~\cite{Bodlaender:21:Parameterized,BodlaenderGJJL22}, it was shown that \textsc{Bandwidth} is XNLP-complete for caterpillars with hair length at most three. We do not expect this result to hold for the directed variant --- i.e., if we have a downward or upward directed tree with all vertices of degree more than two on one path --- but conjecture this case to be polynomial time solvable.
    
    Finally, regarding single machine scheduling with exact delays, results seem to differ significantly between maximum delay value taken alone and combined with the width as parameter (see Tables 1 and 2 in~\cite{MallemHanenMunier:25:Single}). In particular, on a single machine with chains of unit-time jobs and equal-length delays, the problem is W[1]-hard with the former parameter and fixed-parameter tractable with the latter one. Crucially, note that this problem parameterized by maximum delay value is equivalent to~\textsc{Unary Bin Packing} (see Appendix~\ref{appendix:unary_bin_packing}). As such, improving upon the W[1]-hardness of scheduling with exact delays would also help settling the open question in~\cite{JansenKratschMarx:13:Bin}. So far, to the best of our knowledge, only XP-membership is known (e.g., a polynomial-time algorithm for a constant number of bins is mentioned in~\cite{GareyJohnson:79:Computers} and detailed in~\cite[Section~5.2]{GurskiRehsRethmann:19:Knapsack}). In contrast, slight generalizations, such as having either $\ell_{i,j} \in \{0, \ell\}$ or $p_j \in \{1,2\}$, are known to be NP-complete even when the maximum delay value is equal to one~\cite{MallemHanenMunier:25:Single}.





%
%
%
\bibliographystyle{splncs04}
\bibliography{papers}

\appendix

\section{Unary Bin Packing}\label{appendix:unary_bin_packing}

    In this appendix, we show the equivalence between \textsc{Unary Bin Packing} and single machine scheduling of (unary) chains of unit-time jobs with equal-length exact delays. In our extended three-field notation, chain-like precedence constraints are denoted by `\emph{chains}' instead of `\emph{prec}'.
    \begin{verse}
        \textsc{Unary Bin Packing} \\
        \textbf{Given:} Positive integers $B, k, a_1, \ldots, a_n$ all given in unary, respectively representing a bin capacity, a number of bins and $n$~items. \\
       \textbf{Parameter:} number of bins $k$. \\
       \textbf{Question:} is there a partition $(P_i)_{1 \le i \le k}$ of the items such that, for all $i \in [1,k]$, $\left(\sum_{a_j \in P_i} a_j\right) \le B$?
    \end{verse}

    \begin{theorem}
        Problem \textsc{Unary Bin Packing} is equivalent to scheduling problem ${\mathcal{P}_{ex} = 1|chains(\ell^{ex}), \ell_{i,j} = \ell, p_j=1|C_{max} \le D}$ parameterized by (maximum) delay value~$\ell$.
    \end{theorem}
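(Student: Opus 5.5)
The plan is to give parameterized reductions in both directions with $k = \ell+1$. Both directions rest on one structural observation about $\mathcal{P}_{ex}$.

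\emph{Structural observation.} In a chain of unit-time jobs with exact delay $\ell$, each job starts exactly $\ell+1$ time units after its predecessor. So a chain with $a$ jobs starting at time $t$ occupies the times $t, t+(\ell+1), \ldots, t+(a-1)(\ell+1)$. These are $a$ consecutive slots of the residue class $t \bmod (\ell+1)$ in $[0,D-1]$. Partition the time slots $[0,D-1]$ into the $\ell+1$ residue classes $R_0,\ldots,R_\ell$ modulo $\ell+1$. Class $R_\rho$ has $c_\rho = \lceil (D-\rho)/(\ell+1)\rceil$ slots, and all these values lie in $\{B_0-1, B_0\}$ with $B_0=\lceil D/(\ell+1)\rceil$.

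I then claim that $\mathcal{P}_{ex}$ is feasible if and only if the chains, viewed as items of sizes equal to their numbers of jobs, can be partitioned into $\ell+1$ groups with group $\rho$ of total size at most $c_\rho$.
\begin{itemize}
\item Necessity: every chain lies entirely inside one class, and the machine constraint forbids overlaps.
\item Sufficiency: inside each class, place the assigned chains one after another on consecutive slots of the class. Gaps are irrelevant, since no constraint links different chains.
\end{itemize}

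\emph{From Bin Packing to scheduling.} Given capacity $B$, $k$ bins and items $a_1,\ldots,a_n$ in unary, set $\ell = k-1$ and $D = kB$. Create one chain of $a_j$ unit jobs per item. Now every residue class has exactly $B$ slots, so the observation gives the equivalence directly. The new parameter is $\ell = k-1$, and the construction is clearly logspace and polynomial, because the sizes are in unary.

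\emph{From scheduling to Bin Packing.} This is the step I expect to be the only real obstacle: the residue classes may have unequal capacities $c_\rho \in \{B_0-1,B_0\}$, whereas \textsc{Unary Bin Packing} has uniform capacity. I will fix this with forced large items. Set $k=\ell+1$, $H = B_0+1$ and capacity $C = H + B_0$. Keep one item of size $a_j$ per chain, and add one big item of size $H + (B_0 - c_\rho)$ for each $\rho\in[0,\ell]$. Any two big items together have size at least $2H > C$, so no bin receives two of them. Since there are exactly $k$ big items and $k$ bins, each bin receives exactly one. The bin holding the big item for $\rho$ then has residual capacity exactly $c_\rho$, and the observation yields the equivalence.

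It remains to check the reduction requirements. The parameter maps to $k=\ell+1$. All numbers are polynomial in the input size, so they stay unary-encodable; this uses the fact that $D$ may be assumed at most $n(\ell+1)+\ell$ without loss of generality, or else the instance is trivially feasible. Finally, everything is computable in logarithmic space. Together the two reductions establish the claimed equivalence.
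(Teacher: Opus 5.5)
Your proposal is correct and follows essentially the paper's argument. Residue classes modulo $\ell+1$ play the role of bins, and the reverse direction uses exactly the paper's padding: capacity $2B_0+1$ and one fill item of size $B_0+1$ or $B_0+2$ per bin, so no two fill items can share a bin. You also spell out the forward reduction ($\ell=k-1$, $D=kB$) that the paper only cites, and you bound the possibly large binary value of $D$ to keep the numbers unary-encodable, which the paper leaves implicit.
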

    \begin{proof}
        A reduction from \textsc{Unary Bin Packing} to~$\mathcal{P}_{ex}$ was already proposed in~${\text{\cite[Section~3]{MallemHanenMunier:25:Single}}}$. The main argument was that, if a job in a chain was scheduled at a time~$t$, then all jobs in the chain must start at time~$t$ plus some multiple of~$(\ell+1)$. As a result, a bin could be represented a set of starting times with the same value modulo~$(\ell+1)$, with~$(\ell+1)$ as the number of bins.

        With this correspondence in mind, we propose the following reduction from $\mathcal{P}_{ex}$ to~\textsc{Unary Bin Packing}. Let $\mathcal{I'}$ be an instance of $\mathcal{P}_{ex}$. Let $B' = \lceil D/(\ell+1) \rceil$ and $r' = B' (\ell+1) - D$. Then $\mathcal{I'}$ can be seen as fitting unary items into $r'$~bins of capacity~$B'-1$ and $(\ell+1-r')$ bins of capacity~$B'$, where each chain~$C_i$ of~$a_i$ jobs in $\mathcal{I}'$ is represented as an item $a_i$.
        
        In order to simulate bins of different capacities within \textsc{Unary Bin Packing}, we propose an instance~$\mathcal{I}$ where $k=\ell+1$, $B = 2B'+1$ and with $k$~fill items: $r'$~fill items of size $B'+2$ and $k-r'$~fill items of size~$B'+1$. Then there is exactly one fill element per bin, and this results in $r'$ bins of remaining capacity~$B'-1$ and $k-r'$ bins of remaining capacity~$B'$. The equivalence between instances~$\mathcal{I}$ and~$\mathcal{I}'$ is then straightforward. \qed
    \end{proof}

\end{document}